\newtheorem{problem}{Problem}
\newtheorem{theorem}{Theorem}
\newtheorem{definition}[theorem]{Definition}
\newtheorem{proposition}[theorem]{Proposition}
\newtheorem{corollary}[theorem]{Corollary}
\newcommand{\Desc}[2]{\State \makebox[17em][l]{#1}#2}
\DeclareMathOperator*{\argmin}{arg\,min}
\definecolor{grey}{rgb}{0.5, 0.5, 0.49}
\title{Grouped approximate control variate estimators}
\author{Alex A. Gorodetsky$^{1*}$ \and John D. Jakeman$^2$ \and Michael S. Eldred$^2$}
\date{
  \small 
    $^1$Department of Aerospace Engineering, University of Michigan, Ann Arbor, MI. 48109\\%
    $^2$Optimization and Uncertainty Quantification, Sandia National Laboratories, Albuquerque, NM, 87123\\
    \vspace{5pt}
    $^*${\it Corresponding author: goroda@umich.edu}\\[2ex]%
  February 2024
}
\newcommand{\reals}{\mathbb{R}}
\newcommand{\set}[1]{\mathcal{#1}}
\newcommand{\mat}[1]{\mathbf{#1}}
\renewcommand{\vec}[1]{\bm{#1}}
\newcommand{\vari}[1]{\mathbb{V}\left[#1\right]}
\newcommand{\covi}[1]{\mathbb{C}\text{ov}\left[#1\right]}
\begin{document}
\maketitle

\begin{abstract}
  This paper analyzes the approximate control variate  (ACV) approach to multifidelity uncertainty quantification in the case where weighted estimators are combined to form the components of the ACV. The weighted estimators enable one to precisely group models that share input samples to achieve improved variance reduction. We demonstrate that this viewpoint yields a generalized linear estimator that can assign any weight to any sample. This generalization shows that other linear estimators in the literature, particularly the multilevel best linear unbiased estimator (ML-BLUE) of Schaden and Ullman in 2020, becomes a specific version of the ACV estimator of Gorodetsky, Geraci, Jakeman, and Eldred, 2020. Moreover, this connection enables numerous extensions and insights. For example, we empirically show that having non-independent groups can yield better variance reduction compared to the independent groups used by ML-BLUE. Furthermore, we show that such grouped estimators can use arbitrary weighted estimators, not just the  simple Monte Carlo estimators used in ML-BLUE. Furthermore, the analysis enables the derivation of ML-BLUE directly from a variance reduction perspective, rather than a regression perspective.
\end{abstract}

\section{Introduction}

Multifidelity sampling approaches in uncertainty quantification (UQ) seek to reduce the mean squared error in the estimation of the statistics of a high-fidelity quantity of interest. These methods are essential for tractable model-based analyses that require repeated evaluation of computationally expensive simulations. Consequently, the last decade has seen tremendous growth in multifidelity (MF) approaches for reducing error in UQ statistics. For example, multilevel Monte Carlo (MLMC) ~\cite{giles2008}, multi-index Monte Carlo~\cite{haji2016} (MIMC), multifidelity Monte Carlo (MFMC)~\cite{peherstorfer2016}, and multilevel-multifidelity Monte Carlo~\cite{geraci2017multifidelity} have all emerged as approaches to combine evaluations from ensembles of model instances, where these instances are typically defined by varying model forms and/or discretization levels in order to trade accuracy versus cost.

Many of these existing MF sampling methods were unified by the approximate control variate (ACV) framework~\cite{gorodetsky2020}. This framework considers how linear combinations of statistical estimators can be used to achieve improved performance. These underlying estimators can be of varying types including simple Monte Carlo, importance sampling~\cite{pham2022}, or surrogate based~\cite{yang2023control}. They are also typically instantiated with specific sample allocation strategies across models to ensure that certain estimators are grouped together with respect to the input samples.
Overall, this approach outlined a general framework for optimal sample allocation with fused estimators, with flexibility that can be exploited through additional configuration optimizations such as
directed acyclic graph and ensemble selection schemes~\cite{Bomarito2022}.

Recently, the Multilevel Best Linear Unbiased Estimator (ML-BLUE) approach~\cite{schaden2020,schaden2021} has emerged as a group-based sampling alternative to ACV.
ML-BLUE combines Monte Carlo (MC) estimators from different groupings of models based on linear regression theory, focusing on the specific sampling context of a set of independent MC estimators of groups of models.  In this context, a subset of ACV approaches (e.g., ACV-IS with independent samples) is an ML BLUE.  In a broader context (e.g., overlapping non-independent sample sets), another subset of ACV (e.g. ACV-MF) are outside the class of BLUE.
In this paper, we explore the broader context of the ACV framework --- especially the consideration of sample-grouped estimators --- 
and show that any ML-BLUE estimator is in fact an approximate control variate, whereas not every ACV is a ML-BLUE. While ML-BLUE is best for a specific type of model groupings, it may not be best in the sense of providing the greatest variance reduction amongst all possible groupings. In this paper, we show how ML-BLUEs can be reformulated as ACVs and how this insight enables us to show that all estimators that linearly weight samples from model ensembles are representable as a linear ACV.

To be more specific, we consider the case of unbiased estimators so that all of these approaches consider the following problem:

 \begin{problem}[Multifidelity Variance Reduction]\label{prb} 
   Let $Q$ denote a ``high-fidelity'' random variable, and let $\hat{Q}(c)$ denote an {\it unbiased} estimate of some statistic of $Q$ achieved with cost $c$. Let $L$ ``low-fidelity'' random variables be denoted as $Q_{1},\ldots,Q_{L}$. Estimates of statistics $\hat{Q}_i(c_i)$ of each low fidelity model can also be obtained at a cost $c_i.$  Multifidelity variance reduction strategies seek to find a new estimator $\hat{Q}^*$ for the high-fidelity statistics by fusing estimators of all models subject to a total cost.\footnote{One can also seek the smallest variance for a fixed cost.}
 \end{problem}

 The solution to this problem typically has two components: first one must determine an {\it estimator ansatz}, and, second, one must determine how the costs are allocated (typically through numbers of samples). These tasks are interrelated as the best sampling strategy may depend on the estimator ansatz, and vice versa.
Moreover, an ``optimal'' solution remains open. This paper makes a significant stride in this direction by developing an estimator that can represent {\it any linear weighted combination of samples}. The optimal choice of groupings and sample allocations remains open. In particular, the novel contribution of this paper is the development and analysis of the grouped ACV (GACV) estimator (Definition~\ref{def:gacv}), which provides a more general way to look at the ACV where each of the low-fidelity model estimators estimators can be formed through linear combinations of weighted estimators. Moreover, our development of GACV estimators enabled the following additional contributions, in which we:
\begin{enumerate}
\item Demonstrate that any ML-BLUE defined in~\cite{schaden2020,schaden2021,destouches2023multivariate} and the Ensemble ACV~\cite{pham2022} can be interpreted as specific cases of the GACV (Theorem~\ref{thm:mlblue_is_acv} and ~\eqref{eq:ensemble_acv}, respectively);
\item Generalize the idea of independent model groupings to non-independent groupings and to general statistical estimators (beyond Monte Carlo estimators of mean and variance) in Theorem~\ref{thm:opt_gacv}.
\item Show that the ML-BLUE approach can be derived through variance minimization arguments (Corollary~\ref{corr:blue_variance}).
\item Provide empirical validation that non-independent groupings can yield improved variance reduction (Section~\ref{sec:general}).
\end{enumerate}
Note that while analysis of ML-BLUEs as the solution of a variance reduction problem has previously been performed in the case of mean and variance estimation~\cite{destouches2023multivariate}, our results are focused on the general case of arbitrary estimators. 

The remainder of this paper is structured as follows. In Section~\ref{sec:mlblue_is_acv}, we describe the connection between ACVs and ML-BLUEs that serves to inspire and motivate the general approach described in Section~\ref{sec:generalized_estimators}. The generalized estimator is then analyzed and its optimal {selection of weights} is defined in Section~\ref{sec:analysis}. Finally, Section~\ref{sec:benefits} provides a setting for comparisons between the GACV and the ML-BLUE, and it is shown that the additional generality of the GACV can produce estimators with smaller variance than ML-BLUE estimators in certain cases.

\subsection{Notation}\label{sec:notation}

Sets are denoted by caligraphic upper case, such as $\set{W}.$ Exception is made for ordered sets of real numbers corresponding to vectors and matrices.  Vectors are denoted by bold lowercase, {\it e.g.,} $\vec{w}$, and matrices are denoted by bold uppercase, {\it e.g.,} $\mat{W}$. We make specific exception for zero vectors and matrices where $\mat{0}_{L}$ and $\mat{0}_{LM}$ denote a vector of size $L$ and a matrix of size $L \times M$, respectively. A finite set of integers is denoted by $[L] = \{0,1,\ldots,L\}.$ The power set $2^{\mathcal{S}}$ is the set of all possible subsets of $\mathcal{S}$. 

 The elements of a countable set can be mapped to the integers via an index to provide an ordering. For example,
\begin{equation}
  \set{W} = \{ \mathbf{w}^1,\ldots, \mathbf{w}^k \}, \qquad k \in \mathbb{Z}^+
  \label{eq:set_notation}
\end{equation}
denotes a set of $k$ vectors. 
. While superscripts are used to define an ordering in a general set, subscripts are used to denote elements of a vector or matrix --- as in $\mathbf{w}_{i}$ or $\mat{W}_{ij}$. In contrast to elements of a vector, the elements of a set may be different objects. For example, the set in~\eqref{eq:set_notation} differs from a matrix because each of the vectors may be of different sizes. 

We will make special use of multi-indices to denote subsets of $[L].$ Consider a subset of $[L]$ with $n$ elements denoted by $\mathcal{S} \in  2^{[L]} \, \backslash \, \emptyset,$ with $|\mathcal{S}| = n.$ A lexigraphical ordering of this set can be associated with a multi-index vector $\vec{\lambda} \in [L]^n$ with elements $\vec{\lambda}_i \in \mathcal{S}$ for $i = 1,\ldots,n,$ and $\vec{\lambda}_{i} < \vec{\lambda}_j$ if $i \leq j$. Given this ordering, an inverse mapping can be established $\vec{\lambda}^{-1}: \mathcal{S} \to [n]$ from an element to its location in the multi-index. Finally, associated with this subset is a  {\it restriction} matrix $\mat{R}$ that determines $\vec{\lambda}$ from the vector $[L]$. That is $\mat{R} \in \reals^{n \times (L+1)}$ with elements
  \begin{equation}
    \mat{R}_{ij} =
    \left\{
    \begin{array}{cc}
      1 & \text{ if } \vec{\lambda}_i  = j \\
      0 & \text{ otherwise} 
    \end{array}
    \right.
    ,
  \end{equation}
  so that $\vec{\lambda} = \mat{R} [L].$ Lastly, $\mathbf{1}(\vec{w}\geq 0)$ denotes a masking  function which outputs a vector with entries that assume the value of one when $\vec{w}_i \geq 0$ and zero otherwise.

\subsection{Linear Approximate Control Variates}

In this section we describe the linear approximate control variate. We begin with a description of the classical control variate from which it is derived. Control variates combine an ensemble of {\it estimators} to obtain a new estimator with reduced variance compared to some baseline. Let the estimators in this ensemble be $\hat{\set{Q}} = \{\hat{Q}$, $\hat{Q}_1, \ldots, \hat{Q}_{L}\},$ and let $\hat{Q}$ denote the baseline estimator whose variance we seek to reduce.

The linear control variate creates a new estimator through the linear combination 
\begin{equation}
\hat{Q}^{CV}(\alpha, \alpha_1, \alpha_{2},\ldots, \alpha_{L},\beta_1,\ldots,\beta_{L}) = \alpha \hat{Q} + \sum_{\ell=1}^{L} \alpha_{\ell} \hat{Q}_{\ell} + \sum_{\ell=1}^{L} \beta_\ell \mu_\ell
\end{equation}
where $\mu_{\ell}$, for $\ell=1,\ldots,L,$ are presumed to be the {\it known} expectations of $Q_1,\ldots,Q_{L}$; and $\alpha,\alpha_{1},\ldots,\alpha_{L},\beta_{1},\ldots,\beta_{L}$ are real-valued tunable coefficients.

The control variate estimator $\hat{Q}^{CV}$ can be made {\it unbiased} (i.e. $\mathbb{E}\left[\hat{Q}^{CV}\right] = \mathbb{E}\left[\hat{Q}\right]$) through two constraints on the coefficients. The first is that the baseline estimator $\hat{Q}$ is unbiased so that we can set $\alpha = 1$. The second is that the $L$ low-fidelity estimators $\hat{Q}_i$ are unbiased so that setting  $\beta_\ell = -\alpha_{\ell}$ cancels out the sum in expectation. Under these constraints the standard CV representation becomes
\begin{equation}
\hat{Q}^{CV}(\alpha_1, \alpha_{2},\ldots, \alpha_{L}) = \hat{Q} + \sum_{\ell=1}^{L} \alpha_{\ell}\left( \hat{Q}_{\ell} - \mu_\ell\right)
\end{equation}
The remaining free parameters $\alpha_{1},\ldots,\alpha_{L}$ can be selected according to the algorithmic strategy, and typically they are chosen to minimize the overall estimator variance.  We refer the reader to~\cite{lavenberg1981perspective,lavenberg1982statistical,rubinstein1985efficiency} for additional details.

The linear control variate requires knowledge of the means $\mu_{1},\ldots,\mu_{L}$. The linear {\it approximate} control variate was introduced to circumvent this issue~\cite{gorodetsky2020}. The approach assumed a 
simple heuristic: replace $\mu_\ell$, termed the expected control variate mean (ECVM), by another estimator $\hat{Q}_{\ell}^\mu$. To distinguish it from the first term, we refer to the first term as the correlated mean estimator (CME) and denote it by $\hat{Q}_{\ell}^e$ 
so that the approximate control variate is written as~\cite[Eq 7]{gorodetsky2020}
\begin{align}
  \hat{Q}^{ACV}(\alpha_1, \alpha_{2},\ldots, \alpha_{L}) &= \hat{Q} + \sum_{\ell=1}^{L} \alpha_{\ell}\left( \hat{Q}^e_{\ell} - \hat{Q}^\mu_{\ell}\right) = \hat{Q} + \sum_{\ell=1}^{L} \alpha_{\ell} \Delta_{i},
  \label{eq:ACV}
\end{align}
where $\Delta_{\ell} = \hat{Q}^e_{\ell} - \hat{Q}^\mu_{\ell}.$ For $\hat{Q}^{ACV}$ to remain unbiased, it is sufficient to require that either both $\hat{Q}_{\ell}^e$ and $\hat{Q}_{\ell}^\mu$ are unbiased or have the same bias.

Finally, it will often be convenient to treat the baseline estimator in the same framework as the $L$ other alternatives by rewriting~\eqref{eq:ACV} as
\begin{align}
  \hat{Q}^{ACV}(\alpha_1, \alpha_{2},\ldots, \alpha_{L}) &= \sum_{\ell=0}^{L} \alpha_{\ell}\left( \hat{Q}^e_{\ell} - \hat{Q}^\mu_{\ell}\right) = \sum_{\ell=0}^{L} \alpha_{\ell} \Delta_{\ell},
  \label{eq:ACV2}
\end{align}
where $\alpha_{0} = 1$, $Q_0 = Q$, $\hat{Q}^{e}_{0} = \hat{Q} = \Delta_{0}$. Note that in this case we choose $\hat{Q}^\mu_{0} = 0$ for notational convenience, since there is only a single estimator for the high-fidelity model.
 We also note that we have not specified any particular form on $\hat{Q}_{\ell}^{e,\mu}$ -- much of the general theory is agnostic to specific choices, as long as they are unbiased. Again the optimal weights can be chosen to minimize the variance of this estimator, and the reader is referred to~\cite{gorodetsky2020}. Furthermore, we note that the underlying estimators can also be chosen by the user, and need not be restricted to Monte Carlo sampling.

\subsection{Multilevel BLUE estimators}

The ML-BLUE estimator is formed by a linear combination of correlated Monte Carlo estimators. It is defined via $K$ model groups, where each group consists of a set of Monte Carlo estimators for $L+1$ models and different groups consist of estimators formed by independent samples. Specifically the $k$-th group is a couple ($\mathcal{S}^k,\mathcal{Z}^k)$, where the models in the group are a non-empty element of the powerset $\set{S}^k \in 2^{[L]} \, \backslash \, \emptyset $ of size $|\set{S}^k| = n_k.$ Associated with this group is the multi-index vector $\vec{\lambda}^k \in [L]$ and restriction matrix $\mat{R}^{k}$ as discussed in Section~\ref{sec:notation}. The samples used for the Monte Carlo estimators in this group are $\mathcal{Z}^k = \{z_1^k, \ldots, z_{m_k}^k\}.$
Within each group, the ML-BLUE estimator constructs the  Monte Carlo estimators $\hat{Q}_{\ell}^{k} = \frac{1}{m_k} \sum_{i=1}^{m_k} Q_{\ell}(z_{i}^k)$  that are correlated due to shared samples within the group. Note that $\mathcal{Z}^k$ and $\mathcal{Z}^j$ are independent sets so that different groups are not correlated. Finally, the ML-BLUE estimator weights and combines these estimators. The weights for each group are given by $\set{B} = \{\vec{\beta}^1, \ldots, \vec{\beta}^K\}$ with $\vec{\beta}^k \in \reals^{n_k}$, and then the ML-BLUE estimator is given as~\cite[Eq 2.6]{schaden2021}
\begin{equation}
  \hat{Q}^{ML-BLUE}(\set{B}) ~=~
  \sum_{k=1}^K \sum_{\ell \in \set{S}^k} \vec{\beta}_{\vec{\lambda}^{k^{-1}}(\ell)}^{k} \hat{Q}^k_{\ell}
  ~=~
  \sum_{k=1}^K \sum_{\ell \in \set{S}^k} \vec{\beta}_{\vec{\lambda}^{k^{-1}}(\ell)}^{k} \frac{1}{m_k} \sum_{i=1}^{m_k} Q_{\ell}(z^k_{i}).
  \label{eq:blue}
\end{equation}
Note that this estimator is only ``best'' under an optimal choice of weights, and the form above with optimal weights corresponds to the solution of a corresponding least squares regression problem~\cite[Eq. 2.8]{schaden2020}. However, the theory that follows is valid {\it any} choice of weights, including the optimal weights chosen by the least squares problem cited above. Thus, we are slightly generalizing the definition of an ML-BLUE to any weighted set of independent estimators --- it should be understood that many of the desirable properties of the ML-BLUE are in fact due to choosing optimal weights.

It will later be useful to extend each of the coefficient vectors $\vec{\beta}^k$ into a vector of size $L+1$ with zeros filling the elements that correspond to models not included in group $S^k$. These vectors, $\tilde{\vec{\beta}}^k\in \reals^{L+1}$ are obtained through multiplication by the transpose of the corresponding restriction matrices
\begin{equation}
  \tilde{\vec{\beta}}^k = \mat{R}^{k^T} \vec{\beta}^k.
  \label{eq:beta_tilde}
\end{equation}
Using these zero-filled vectors, the ML-BLUE is equivalently
\begin{equation}
  \hat{Q}^{ML-BLUE}(\set{B}) ~=~ \sum_{k=1}^K \sum_{\ell \in \set{S}^k}^{L} \tilde{\vec{\beta}}_{\ell}^{k} \hat{Q}^k_{\ell} ~=~
  \sum_{k=1}^K \sum_{\ell=0}^{L} \tilde{\vec{\beta}}_{\ell}^{k} \hat{Q}^k_{\ell},
  \label{eq:blue2}
\end{equation}
with the convention that $\hat{Q}^k_{\ell} = 0$ if $\ell \notin \set{S}^k$  in the second equality.

An appropriate choice of coefficients $\vec{\beta}_{\ell}^k$ leads to an estimator that both is unbiased and has reduced variance. We will return to the optimal set of coefficients in Section~\ref{sec:analysis}. For now, we will demonstrate that, for any choice of coefficients that yields an {\it unbiased} estimator, the MLBLUE is also an ACV with a particular choice of ensemble estimators.

We begin with a requirement needed for unbiasedness.
\begin{proposition}
  Let $\beta_{\ell} = \sum_{k=1}^{K} \vec{\tilde{\beta}}_{\ell}^k$ for $\ell =0,\ldots,L$ be the sum of the weights of each model $\ell$ across the groups. An unbiased estimator of the form~\eqref{eq:blue} requires
  \begin{equation}
    \left(1 - \beta_0\right)\mathbb{E}\left[Q_0\right] = \sum_{\ell=1}^L \mathbb{E}\left[Q_{\ell}\right] \beta_{\ell}
    \label{eq:prop1_result}
  \end{equation}
\end{proposition}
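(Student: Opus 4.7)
The plan is to compute the expectation of the ML-BLUE estimator directly from the zero-filled representation in~\eqref{eq:blue2}, then impose the unbiasedness condition $\mathbb{E}[\hat{Q}^{ML-BLUE}] = \mathbb{E}[Q_0]$ and rearrange.

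First I would take expectation of both sides of~\eqref{eq:blue2}. Because each within-group Monte Carlo estimator $\hat{Q}^k_\ell = \frac{1}{m_k}\sum_{i=1}^{m_k} Q_\ell(z_i^k)$ is unbiased for $\mathbb{E}[Q_\ell]$ when $\ell \in \set{S}^k$, and by the convention $\hat{Q}^k_\ell = 0$ when $\ell \notin \set{S}^k$ (which is consistent with the fact that $\tilde{\vec{\beta}}^k_\ell = 0$ precisely for those indices by~\eqref{eq:beta_tilde}), linearity of expectation gives
\begin{equation}
\mathbb{E}\!\left[\hat{Q}^{ML-BLUE}(\set{B})\right] = \sum_{k=1}^K \sum_{\ell=0}^{L} \tilde{\vec{\beta}}_{\ell}^{k} \, \mathbb{E}[Q_\ell].
\end{equation}

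Next I would swap the order of summation, which is permitted because the sum is finite, and collect the $\tilde{\vec{\beta}}_{\ell}^{k}$ contributions across groups using the defined aggregate weight $\beta_\ell = \sum_{k=1}^{K} \tilde{\vec{\beta}}_{\ell}^k$. This yields
\begin{equation}
\mathbb{E}\!\left[\hat{Q}^{ML-BLUE}(\set{B})\right] = \sum_{\ell=0}^{L} \beta_\ell \, \mathbb{E}[Q_\ell] = \beta_0 \, \mathbb{E}[Q_0] + \sum_{\ell=1}^{L} \beta_\ell \, \mathbb{E}[Q_\ell].
\end{equation}

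Finally, unbiasedness of the estimator for the high-fidelity statistic requires $\mathbb{E}[\hat{Q}^{ML-BLUE}(\set{B})] = \mathbb{E}[Q_0]$. Subtracting $\beta_0 \mathbb{E}[Q_0]$ from both sides of the resulting equality immediately produces the stated identity~\eqref{eq:prop1_result}. There is no real obstacle here: the argument is entirely bookkeeping enabled by the zero-filled weight vectors in~\eqref{eq:beta_tilde}, which let us treat absent models as zero-weighted rather than carrying the restriction to $\ell \in \set{S}^k$ through the calculation. The only mild subtlety is justifying that the convention $\hat{Q}^k_\ell = 0$ for $\ell \notin \set{S}^k$ combines consistently with $\tilde{\vec{\beta}}^k_\ell = 0$ on the same index set, so that the interchange of sums and the introduction of $\beta_\ell$ does not accidentally include terms that were never part of the original estimator~\eqref{eq:blue}.
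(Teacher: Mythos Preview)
Your proposal is correct and follows essentially the same approach as the paper: take the expectation of~\eqref{eq:blue2}, use unbiasedness of the group Monte Carlo estimators to replace $\mathbb{E}[\hat{Q}^k_\ell]$ by $\mathbb{E}[Q_\ell]$, collect the weights over $k$ into $\beta_\ell$, separate the $\ell=0$ term, and impose $\mathbb{E}[\hat{Q}^{ML-BLUE}]=\mathbb{E}[Q_0]$. The paper's proof is slightly terser (it does not explicitly discuss the interaction of the convention $\hat{Q}^k_\ell=0$ with $\tilde{\vec{\beta}}^k_\ell=0$), but the logical content is identical.
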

  
\begin{proof}
  To be unbiased, we require
    $\mathbb{E}\left[\hat{Q}^{ML-BLUE}(\beta)\right] = \mathbb{E}[Q_0] = \mathbb{E}[Q].$
  From definition~\eqref{eq:blue2}, we have
  \begin{equation}
    \mathbb{E}\left[\hat{Q}^{ML-BLUE}(\vec{\beta})\right]
    = \sum_{k=1}^K \sum_{\ell = 0}^L \tilde{\vec{\beta}}_{\ell}^{k} \mathbb{E}\left[Q_{\ell}\right].
  \end{equation}
  We now separate the highest fidelity model from the lower fidelity models and re-arrange the sum to obtain
  \begin{align}
    \mathbb{E}\left[\hat{Q}^{MBLUE}(\beta)\right]
    &= \mathbb{E}\left[Q_{0}\right] \sum_{k=1}^K \vec{\beta}_{0}^k + \sum_{\ell =1}^L \mathbb{E}\left[Q_{\ell}\right]  \sum_{k=1}^K  \vec{\beta}_{\ell}^{k}   
    = \mathbb{E}\left[Q_{0}\right]\beta_0 + \sum_{\ell=1}^{L} \mathbb{E}\left[Q_{\ell}\right] \beta_{\ell}.
  \end{align}
 Thus to achieve an unbiased estimate we require
  \begin{equation}
    \mathbb{E}\left[Q_{0}\right] = \mathbb{E}\left[Q_{0}\right]\beta_0 + \sum_{\ell=1}^{L} \mathbb{E}\left[Q_{\ell}\right] \beta_{\ell}.
  \end{equation}
  Equality necessitates that the coefficients satisfy~\eqref{eq:prop1_result},
completing the proof.           
\end{proof}

These conditions on the coefficients needed to satisfy the unbiasedness requirement are clearly dependent on the actual expected values of the quantities of interest. This makes the {\it model-specific} solution uncomputable in practice because these quantities are not available. 
Thus we must ask whether there exists a set of coefficients for which this condition is satisfied {\it regardless of the choice of models}. The linear relationship between coefficients and the expectations makes the following condition obvious.

\begin{proposition}[ML-BLUE Unbiasedness] \label{prop:unbiasedness}
  The ML-BLUE is unbiased if $\beta_{0} = 1$ and $\beta_{\ell} = 0$ for $\ell = 1,\ldots, L.$
\end{proposition}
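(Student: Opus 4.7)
My plan is to prove this as an essentially immediate corollary of the previous proposition. That proposition reduced unbiasedness of the ML-BLUE to the scalar identity
\[
(1 - \beta_0)\,\mathbb{E}[Q_0] \;=\; \sum_{\ell=1}^{L} \beta_\ell\,\mathbb{E}[Q_\ell].
\]
Under the assignment $\beta_0 = 1$ and $\beta_\ell = 0$ for $\ell = 1,\ldots,L$, the coefficient of $\mathbb{E}[Q_0]$ on the left becomes $1-1 = 0$, and every summand on the right carries the factor $\beta_\ell = 0$, so both sides vanish. Hence the identity is satisfied regardless of the expectations, and by the preceding proposition the ML-BLUE is unbiased. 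Since the proposition only asserts sufficiency, this single substitution completes the proof.

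The only conceptual remark worth making concerns \emph{why} this particular choice is natural, rather than some other choice for which the identity happens to hold. The expectations $\mathbb{E}[Q_0],\ldots,\mathbb{E}[Q_L]$ are unknown model-dependent quantities that may take essentially arbitrary real values, so requiring the identity to hold in a model-agnostic fashion amounts to requiring equality of two linear functionals of the vector $(\mathbb{E}[Q_0],\ldots,\mathbb{E}[Q_L])$ over all admissible inputs. The only way for such a linear identity to hold for every input vector is for every coefficient to vanish, which forces $1-\beta_0 = 0$ and $\beta_\ell = 0$ for $\ell \geq 1$. Thus the stated assignment is not merely sufficient but is effectively the unique model-agnostic choice, explaining why it is singled out. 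There is no substantive obstacle in the argument; the result is a direct consequence of the previous proposition together with the observation that the $\mathbb{E}[Q_\ell]$ must be treated as free parameters.
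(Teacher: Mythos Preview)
Your proof is correct and matches the paper's approach exactly: the paper also treats this as an immediate consequence of the preceding proposition, noting that the linear relation in~\eqref{eq:prop1_result} must hold for arbitrary model expectations, which forces $1-\beta_0=0$ and $\beta_\ell=0$ for $\ell\ge 1$. Your additional remark on model-agnostic uniqueness is precisely the observation the paper makes in the text leading into the proposition.
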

Moreover, all reported optimal coefficients satisfy these conditions, and are guaranteed to satisfy it due to due to Gauss-Markov-Aitken theorem arguments~\cite{schaden2020,schaden2021}.

\section{ML-BLUE estimators are approximate control variates}~\label{sec:mlblue_is_acv}

When originally proposed, ML-BLUE estimators were not recognized as ACV estimators. However, any ML-BLUE estimator is in fact an approximate control variate. To prove this, we transform the expression of an ML-BLUE estimator into the corresponding estimators $\hat{Q}_{\ell}^e$, $\hat{Q}_{\ell}^\mu$ and weights $\alpha_{\ell}$ that comprise an ACV estimator taking the form of~\eqref{eq:ACV}.
\begin{theorem}[ML-BLUE is an ACV]\label{thm:mlblue_is_acv}
  For any set of coefficients $\set{B}$, groupings $\{\mathcal{S}^1,\ldots,\mathcal{S}^k\}$ and samples $\{\mathcal{Z}^1,\ldots,\mathcal{Z}^K\}$ for which the ML-BLUE estimator~(Eqs.~\eqref{eq:blue} or~\eqref{eq:blue2}) is unbiased, there corresponds 
an equivalent ACV estimator~\eqref{eq:ACV} defined by weights
  \begin{equation}
    \alpha_{\ell} = \sum_{k=1}^K  
    \mathbf{1}(\tilde{\vec{\beta}}_{\ell}^{k} \geq 0)\tilde{\vec{\beta}}_{\ell}^k
      = -\sum_{k=1}^K  \mathbf{1}(\tilde{\vec{\beta}}_{\ell}^{k} < 0)\tilde{\vec{\beta}}_{\ell}^{k},
    \label{eq:equiv_alpha}
  \end{equation}
  for $\ell = 1,\ldots,L,$ and a set of weighted estimators

    \begin{align}
  \hat{Q} = \sum_{k=1}^K \tilde{\vec{\beta}}_0^k \hat{Q}_0^{k} 
  \qquad
  \hat{Q}_{\ell}^e = \sum_{k=1}^K  \omega_{\ell}^{k,e} \hat{Q}_{\ell}^k 
  \qquad
  \hat{Q}_{\ell}^\mu = \sum_{k=1}^K  \omega_{\ell}^{k,\mu} \hat{Q}_{\ell}^k,
  \label{eq:Qs}
    \end{align}
    where $\hat{Q}_{\ell}^k$ is a Monte Carlo estimator using the samples $\mathcal{Z}^k$.
    The weights are given by $\omega_{\ell}^{k,e} = \mathbf{1}(\tilde{\vec{\beta}}_{\ell}^{k} \geq 0)\tilde{\vec{\beta}}_{\ell}^k / \alpha_{\ell}$ and $\omega_{\ell}^{k,\mu} = -\mathbf{1}(\tilde{\vec{\beta}}_{\ell}^{k} < 0) \tilde{\vec{\beta}}_{\ell}^k / \alpha_{\ell},$  for $\ell=1,\ldots,L$.

\end{theorem}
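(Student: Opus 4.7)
The plan is to proceed by direct substitution: take the candidate ACV expression built from the weights $\alpha_\ell$ and estimators $\hat{Q}, \hat{Q}_\ell^e, \hat{Q}_\ell^\mu$ given in the theorem, simplify it algebraically, and show it coincides with the ML-BLUE expression~\eqref{eq:blue2}. The conceptual engine is the identity $\tilde{\vec{\beta}}_\ell^k = \mathbf{1}(\tilde{\vec{\beta}}_\ell^k\geq 0)\tilde{\vec{\beta}}_\ell^k + \mathbf{1}(\tilde{\vec{\beta}}_\ell^k<0)\tilde{\vec{\beta}}_\ell^k$, which decomposes each coefficient into its positive and negative parts. This is precisely the split that turns a single weighted sum over $k$ into a difference of two weighted sums, i.e., into the CME--ECVM structure of the ACV.

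First I would verify the two expressions for $\alpha_\ell$ in~\eqref{eq:equiv_alpha} agree. By Proposition~\ref{prop:unbiasedness}, unbiasedness forces $\beta_\ell = \sum_{k=1}^K \tilde{\vec{\beta}}_\ell^k = 0$ for $\ell=1,\ldots,L$. Splitting this sum at zero gives $\sum_k \mathbf{1}(\tilde{\vec{\beta}}_\ell^k\geq 0)\tilde{\vec{\beta}}_\ell^k = -\sum_k \mathbf{1}(\tilde{\vec{\beta}}_\ell^k<0)\tilde{\vec{\beta}}_\ell^k$, which is exactly the claimed equality. This double representation is essential because it lets both weight normalizations $\sum_k \omega_\ell^{k,e} = 1$ and $\sum_k \omega_\ell^{k,\mu} = 1$ hold simultaneously, which in turn makes $\hat{Q}_\ell^e$ and $\hat{Q}_\ell^\mu$ convex-style combinations of unbiased estimators, hence unbiased. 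For the baseline $\hat{Q}$, unbiasedness follows from $\sum_k \tilde{\vec{\beta}}_0^k = \beta_0 = 1$ (again Proposition~\ref{prop:unbiasedness}), so all three constituent estimators are unbiased and the resulting ACV is a valid unbiased ACV estimator in the sense of~\eqref{eq:ACV}.

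Next I would carry out the substitution. For each $\ell\in\{1,\ldots,L\}$, the $\alpha_\ell$ factor in $\alpha_\ell(\hat{Q}_\ell^e - \hat{Q}_\ell^\mu)$ cancels the $1/\alpha_\ell$ in the weights $\omega_\ell^{k,e}$ and $\omega_\ell^{k,\mu}$, leaving
\begin{equation*}
\alpha_\ell(\hat{Q}_\ell^e - \hat{Q}_\ell^\mu) = \sum_{k=1}^K \bigl[\mathbf{1}(\tilde{\vec{\beta}}_\ell^k\geq 0) + \mathbf{1}(\tilde{\vec{\beta}}_\ell^k<0)\bigr]\tilde{\vec{\beta}}_\ell^k\, \hat{Q}_\ell^k = \sum_{k=1}^K \tilde{\vec{\beta}}_\ell^k \hat{Q}_\ell^k.
\end{equation*}
Summing over $\ell=1,\ldots,L$ and adding the baseline $\hat{Q} = \sum_k \tilde{\vec{\beta}}_0^k \hat{Q}_0^k$ recovers precisely $\sum_{k=1}^K \sum_{\ell=0}^L \tilde{\vec{\beta}}_\ell^k \hat{Q}_\ell^k$, which is the ML-BLUE in the form~\eqref{eq:blue2}.

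The only snag is the degenerate case $\alpha_\ell = 0$, which occurs exactly when $\tilde{\vec{\beta}}_\ell^k = 0$ for every $k$; then $\omega_\ell^{k,e}$ and $\omega_\ell^{k,\mu}$ are ill-defined, but model $\ell$ contributes nothing to either the ML-BLUE or the ACV, so it can be removed from the ensemble before applying the construction (equivalently, one may adopt the convention $0/0\cdot 0 = 0$). This is bookkeeping rather than a real obstacle; the substantive content of the theorem is the positive/negative decomposition of the $\tilde{\vec{\beta}}_\ell^k$, and once that is identified the argument is a straightforward matching exercise.
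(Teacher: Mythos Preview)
Your proposal is correct and follows essentially the same approach as the paper: both hinge on the positive/negative split $\tilde{\vec{\beta}}_\ell^k = \mathbf{1}(\tilde{\vec{\beta}}_\ell^k\geq 0)\tilde{\vec{\beta}}_\ell^k + \mathbf{1}(\tilde{\vec{\beta}}_\ell^k<0)\tilde{\vec{\beta}}_\ell^k$ together with Proposition~\ref{prop:unbiasedness}, with the only cosmetic difference that the paper rewrites the ML-BLUE into ACV form while you substitute the ACV ingredients and recover the ML-BLUE. Your explicit handling of the degenerate case $\alpha_\ell = 0$ is a small piece of bookkeeping the paper omits.
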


\begin{proof}
  The proof follows from a straightforward grouping of estimators into those with positive and negative weights. We first rewrite Equation~\eqref{eq:blue2} by reversing the order of the sums, pulling out the high-fidelity model, and separating the positive and negative coefficients
\begin{align}
  \hat{Q}^{ML-BLUE}(\set{B}) &=  \sum_{k=1}^K \tilde{\vec{\beta}}_{0}^{k} \hat{Q}_{0}
  + \sum_{\ell=1}^L \left( \sum_{k=1}^K  \tilde{\vec{\beta}}_{\ell}^{k} \hat{Q}^k_{\ell} \right) \\
&= \sum_{k=1}^K \tilde{\vec{\beta}}_{0}^{k} \hat{Q}_{0}
  + \sum_{\ell =1}^L \left( \sum_{k=1}^K  \mathbf{1}(\tilde{\vec{\beta}}_{\ell}^{k} \geq 0) \tilde{\vec{\beta}}_{\ell}^{k} \hat{Q}^k_{\ell} +  \mathbf{1}(\tilde{\vec{\beta}}_{\ell}^{k} < 0) \tilde{\vec{\beta}}_{\ell}^{k} \hat{Q}^k_{\ell} \right) \\
&=  \sum_{k=1}^K \tilde{\vec{\beta}}_{0}^{k} \frac{1}{m_k} \sum_{i=1}^{m_k} Q_{0}(z^k_{i}) + \sum_{\ell=1}^{L}\alpha_{\ell} \left(\sum_{k=1}^K  \omega_{\ell}^{k,e} \frac{1}{m_k} \sum_{i=1}^{m_k} Q_{\ell}(z^k_{i}) -  \sum_{k=1}^K  \omega_{\ell}^{k,\mu} \frac{1}{m_k} \sum_{i=1}^{m_k} Q_{\ell}(z^k_{i})  \right)  
\end{align}
where the final equality used~\eqref{eq:equiv_alpha} and the weight definitions. Substituting the definitions~\eqref{eq:Qs} we obtain
\begin{align}
  \hat{Q}^{ML-BLUE}(\set{B}) = \hat{Q} + \sum_{\ell=1}^L \alpha_{\ell} \left(\hat{Q}_{\ell}^e - \hat{Q}_{\ell}^{\mu}\right).
\end{align}
The right hand side is equivalent to an ACV estimator, if each of the components can be shown to be unbiased. Note that the weights clearly satisfy the following properties $\sum_{k=1}^K \omega_{\ell}^{k, 1} = 1$, $\omega_{\ell}^{k, 0} > 0$, $\sum_{k=1}^K \omega_{\ell}^{k, 2} = 1$ and $\omega_{\ell}^{k, 2} > 0$. Consequently,
\begin{equation*}
  \mathbb{E}\left[\hat{Q}\right] = \sum_{k=1}^K \tilde{\vec{\beta}}_{0}^{k} \mathbb{E}\left[Q_0\right] = \mathbb{E}\left[Q_0\right] \qquad
  \mathbb{E}\left[\hat{Q}_{\ell}^e\right] = \sum_{k=1}^K  \omega_{\ell}^{k,e}\mathbb{E}\left[Q_{\ell}\right] = \mathbb{E}\left[Q_{\ell}\right]  \qquad
  \mathbb{E}\left[\hat{Q}_{\ell}^\mu\right] = \sum_{k=1}^K  \omega_{\ell}^{k,\mu}\mathbb{E}\left[Q_{\ell}\right] = \mathbb{E}\left[Q_{\ell}\right].
\end{equation*}
The last equality of the first expression for $\mathbb{E}[\hat{Q}]$ arises due to Proposition~\ref{prop:unbiasedness}. The other expressions follow from the summation properties of the weights.
\end{proof}

\section{Generalized estimators} \label{sec:generalized_estimators}
The connection between ML-BLUEs and ACVs suggests two generalizations: (1) we can use general estimators within a grouping strategy, and (2) group estimations may benefit from relaxing the independence requirement and reusing samples across correlated input sets. Note that this last benefit has already been observed within the class of ACV estimators, e.g., the ACV-MF estimator~\cite[Def. 3]{gorodetsky2020} can outperform the ACV-IS estimator~\cite[Sec 5.4]{schaden2020}
    when exchanging sample independence for sample reuse
      \footnote{Note that the ACV-IS estimator described in~\cite[Sec. 5.4]{schaden2020} is slightly different from ACV-IS in~\cite[Def. 2]{gorodetsky2020} because the ACV paper reuses a small portion of samples for the ECVM. The authors believe that the version in~\cite{schaden2020} is more in the spirit of the ``independent'' aspect of the ACV-IS, and we follow their work here. However, empirical results in both papers suggest similar behavior in both cases.}.

Using the idea of groupings and the connection we established between ML-BLUE and ACV, we now provide several novel estimators based on what we call the grouped ACV ansatz
\begin{align}
  \hat{Q}^{GACV} &= \sum_{k=1}^K \omega_{0}^{k,e} \hat{Q}_0^k + \sum_{\ell=1}^L \alpha_{\ell}\left(
  \sum_{k=1}^{K} \omega_{\ell}^{k,e}\hat{Q}_{\ell}^{k,e} - \sum_{k=1}^{K} \omega_{\ell}^{k,\mu}\hat{Q}_{\ell}^{k,\mu}\right).
  \label{eq:gacv1}
\end{align}
Like ML-BLUE, there are $K$ groups, but now there is no requirement for each set of group samples to be independent. Nor is there any requirement for the individual estimators to arise from Monte Carlo.
Moreover, both the weights on the $e$ and $\mu$ terms may be nonzero for the same $\ell$ and $k$ if $\hat{Q}_{\ell}^{k,e}$ and $\hat{Q}_{\ell}^{k,\mu}$ do not use identical samples.%

To make \eqref{eq:gacv1} easier to work with, first note that \eqref{eq:gacv1} is an unbiased estimator if $\hat{Q}_0^k$ and $\hat{Q}_{\ell}^{k,(e,\mu)}$  are unbiased (or the latter have the same bias) for $\ell=1,\ldots,L$ while $\sum_{k=1}^{K} \omega_{\ell}^{k,e} = \sum_{k=1}^{K} \omega_{\ell}^{k,\mu}$ and $\sum_{k=1}^K \omega_{0}^{k,\mu} = 1.$ Second, note that the control variate weight $\alpha_{\ell}$ can be embedded into the coefficients $\omega$, with no change in constraints, so that we can write
\begin{align}
  \hat{Q}^{GACV} &= \sum_{k=1}^K \omega_{0}^{k,e} \hat{Q}_0^k + \sum_{\ell=1}^L \left(
  \sum_{k=1}^{K} \omega_{\ell}^{k,e}\hat{Q}_{\ell}^{k,e} - \sum_{k=1}^{K} \omega_{\ell}^{k,\mu}\hat{Q}_{\ell}^{k,\mu}\right),
\end{align}
This expression can be generalized further by noting that we implicitly assumed $\hat{Q}_{\ell}^{k,e} \neq \hat{Q}_{\ell}^{k,\mu}$, since if they were equal, these two terms would be combined. Consequently, the second term can be treated as another set of $K$ groups resulting in the following expression that contains a total of $2K$ groups
\begin{align}
  \hat{Q}^{GACV} ~&=~ \sum_{k=1}^K \omega_{0}^{k,e} \hat{Q}_0^k +
  \sum_{k=1}^{K}\sum_{\ell=1}^L  \omega_{\ell}^{k,e}\hat{Q}_{\ell}^{k,e}
  + \sum_{k=K+1}^{2K}\sum_{\ell=1}^L \omega_{\ell}^{k,\mu}\hat{Q}_{\ell}^{k,\mu}
  ~=~ \sum_{k=1}^K \omega_{0}^{k,e} \hat{Q}_0^k + \sum_{k=1}^{2K}\sum_{\ell=1}^L\tilde{\vec{\beta}}_{\ell}^{k}\hat{Q}_{\ell}^{k},
\end{align}
where the last equality combines the two double summations of the previous equality by setting the weights $\tilde{\vec{\beta}}^k_{\ell}=\omega_{\ell}^{k,e}$
if $k \leq K$ and $\omega_{\ell}^{k, \mu}$ otherwise, and by defining the estimator $\hat{Q}_{\ell}^k$ as $\hat{Q}_{\ell}^{k,e}$ if $k \leq K$ and $\hat{Q}_{\ell}^{k,\mu}$ otherwise.  The following definition extends this idea developed for two separate groupings to a more general 
strategy that can be used with $K$ groupings.

\begin{definition}[Generalized Linear Grouped ACV estimator] \label{def:gacv}
  The {\it generalized grouped ACV estimator} is defined by $K$ model groupings consisting of subsets of estimators of $L+1$ models. The $k$-th group is defined by a linear combination of estimators for models defined in $\set{S}^k \in 2^{[L]} \, \backslash \, \emptyset $ of size $|\set{S}^k| = n_k$ according to
\begin{align}
  \hat{Q}^{GACV}(\set{B}) ~=~ \sum_{k=1}^{K} \sum_{\ell \in \set{S}^k}  \vec{\beta}_{\vec{\lambda}^{k^{-1}}(\ell)}^{k}\hat{Q}_{\ell}^{k} ~=~  \sum_{k=1}^{K} \sum_{\ell=0}^L  \tilde{\vec{\beta}}_{\ell}^{k}\hat{Q}_{\ell}^{k},
  \label{eq:GACV}
\end{align}
where $\set{B} = \{\vec{\beta}^1, \ldots, \vec{\beta}^K\}$ with $\vec{\beta}^k \in \reals^{n_k}$. Furthermore, $\vec{\lambda}^k$ and $\mat{R}^{k}$ are the multi-index and restriction operators associated with each $\set{S}^k$, as discussed in Section~\ref{sec:notation}. Finally, $\tilde{\vec{\beta}}^k$ is defined as in Equation~\eqref{eq:beta_tilde}.
\end{definition}

Note this estimator appears identical to~\eqref{eq:blue2}, with this exception that $\hat{Q}_{\ell}^k$ {\it need not be a simple Monte Carlo estimator}. For example, it could be an importance sampling estimator or a more complex estimator derived as some function obtained by first constructing a surrogate model. 
A second, and more critical, difference is that the estimators across groups {\it need not be independent}. Indeed each group can even use an identical set of models, but with different estimators. %
Nevertheless, the linear structure yields the same requirements on the weights as the ML-BLUE estimator to ensure unbiasedness. Additionally, the estimator can be written as an ACV with essentially the same argument as in Section~\ref{sec:mlblue_is_acv}. Note that each estimator $\hat{Q}_{\ell}^k$ can itself be potentially optimized; for example, one could use an ensemble estimator that serves to connect ACV and ML-BLUE estimators where the weights of this ensemble can be optimized. Moreover, the sample allocation can be optimized for each estimator and the ensemble allocation can be optimized as well. Prior to analyzing this estimator, we provide some connections to existing approaches.

Note that the GACV is in fact {\it the most general linear ansatz} for any control-variate type variance reduction scheme. If we simply set the estimator of each group to be a particular evaluation of a model $\hat{Q}_{\ell}^k = Q_{\ell}^k(z_{1}^k)$, then we allow any linear combinations of this model. Thus, the GACV can be viewed as simply combining individual evaluations of each of the $L+1$ models -- regardless of whether they are evaluated at the same inputs or not, and regardless of the combinations of models evaluated at a particular input. 

As just one other example available in the literature, consider the Ensemble ACV estimator~\cite{pham2022}.
The ensemble ACV is a variant  of the proposed GACV estimator with a specific choice of weights. This estimator is obtained by first rewriting~\eqref{eq:GACV} as 
\begin{align}
\hat{Q}^{GACV} &= \sum_{k=1}^K\left( \omega_{0}^{k,e} \hat{Q}_0^k + \sum_{\ell=1}^L \left(
\omega_{\ell}^{k,e}\hat{Q}_{\ell}^{k,e} - \omega_{\ell}^{k,\mu}\hat{Q}_{\ell}^{k,\mu}\right)\right)
\label{eq:sum_acv},
\end{align}
and then making the following assumptions: (1) each of the $K$ groups involves the same models, but independent estimators; (2) the high-fidelity estimators are equally weighted across groups, $\omega_{0}^{k,e} = \frac{1}{K}$; and (3) the weights of a given model across all groups is identical, $\omega_{\ell}^{k,e} = \omega_{\ell}^{k, \mu} = \frac{\vec{\alpha}_{\ell}}{K}.$ Under these assumptions we have
\begin{align}
\hat{Q}^{\text{ensemble}-ACV}(\vec{\alpha}) &= \frac{1}{K}\sum_{k=1}^K\left( \hat{Q}_0^k + \sum_{\ell=1}^L \vec{\alpha}_{\ell} \left(
\hat{Q}_{\ell}^{k,e} - \hat{Q}_{\ell}^{k,\mu}\right)\right),
\label{eq:ensemble_acv}
\end{align}
precisely the ensemble ACV estimator given by~\cite[Eq. 34]{pham2022}. We refer to that paper for analysis of the specific estimator, and how the weights $\vec{\alpha}$ can be chosen.

\section{Analysis}\label{sec:analysis}

In this section, we analyze the generalized estimator~\eqref{eq:GACV} by computing its variance, and then defining an optimal selection of weights. First we note that, for the estimator to be unbiased, we again require
\begin{equation}
  \sum_{k=1}^K \tilde{\vec{\beta}}_{0}^{k} = 1 \quad \textrm{ and } \quad \sum_{k=1}^K \tilde{\vec{\beta}}_{\ell}^{k} = 0, \quad \text{ for } \ell=1,\ldots,L
  \label{eq:unbiasedness}
\end{equation}
under the assumption that the estimator for each model $\ell$, $\hat{Q}_{\ell}^k$, has the same bias for each $k$.

Next let us define some covariances between and within groups. First the covariance matrix of all estimators in group $k$ is denoted by
\begin{equation}
  \mat{C}^k = \vari{\vec{\hat{Q}}^{k}} =
  \begin{bmatrix}
    \vari{\hat{Q}_{\vec{\lambda}^k_1}} & \covi{\hat{Q}_{\vec{\lambda}^k_1}, \hat{Q}_{\vec{\lambda}^k_2}} & \cdots & \covi{\hat{Q}_{\vec{\lambda}^k_1}, \hat{Q}_{\vec{\lambda}^k_{n_k}}} \\
    & \ddots &  & \vdots \\
    &  & \ddots &  \\
    \normalfont{\text{Sym}} &  &   & \vari{\hat{Q}_{\vec{\lambda}^k_{n_k}}}
  \end{bmatrix}
  \in \reals^{n_k \times n_k}.
\end{equation}
Next, the covariance between estimators in groups $k$ and $k^{\prime}$ is given by
\begin{equation}
  \mat{C}^{kk^{\prime}} =
  \begin{bmatrix}
    \covi{\hat{Q}_{\lambda_1^k}^{k}, \hat{Q}_{\lambda_{1}^{k^\prime}}^{k^{\prime}}} 
    & \cdots & \covi{\hat{Q}_{\lambda_1^k}^{k}, \hat{Q}_{\lambda_{n_{k^{\prime}}}^{k^\prime}}^{k^{\prime}}} \\
    \vdots  & \ddots & \vdots \\
    \covi{\hat{Q}_{\lambda_{n_k}^k}^{k}, \hat{Q}_{\lambda_{1}^{k^\prime}}^{k^{\prime}}} & \cdots & 
    \covi{\hat{Q}_{\lambda_{n_k}^k}^{k}, \hat{Q}_{\lambda_{n_{k^{\prime}}}^{k^\prime}}^{k^{\prime}}}
  \end{bmatrix} \in \reals^{n_k \times n_{k^{\prime}}}.
\end{equation}
Together, these covariance matrices can be written as a single matrix 

  \begin{equation}
    \mat{C} =     \begin{bmatrix}
      \mat{C}^1   & \mat{C}^{12}& \cdots &              & \mat{C}^{1K}\\
                  & \mat{C}^2  & \mat{C}^{23}  & \cdots          & \vdots\\
                  &            & \ddots &              &        \\
                  &            &        & \mat{C}^{K-1} & \mat{C}^{(K-1)K}\\
      \text{Sym}  &            &        &              & \mat{C}^K
    \end{bmatrix}
    \label{eq:mat_C}
  \end{equation}
  Note that, to obtain an optimal set of weights, we will assume that the individual model estimators of each group are structured so that the covariance among them is invertible.

  The variance of the estimator can be written as a function of these matrices. This is summarized by the following result.
  \begin{proposition}[Variance of the GACV]\label{prop:var_gacv}
    The variance of the GACV estimator in Definition~\ref{def:gacv} is given by
    \begin{equation}
      \vari{\hat{Q}^{GACV}(\set{B})} = \vec{\beta}^T \mat{C} \vec{\beta}, \qquad \textrm{ where  }  \qquad    \vec{\beta} =
      \begin{bmatrix}
        \vec{\beta}^1 \\ 
        \vec{\beta}^2 \\
        \vdots \\
        \vec{\beta}^K
      \end{bmatrix},
    \end{equation}
    and $\mat{C}$ is given by~\eqref{eq:mat_C}.
  \end{proposition}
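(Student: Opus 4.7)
The plan is to recognize the GACV as a single inner product of the stacked weight vector $\vec{\beta}$ against a stacked vector of individual group estimators, and then apply the standard identity for the variance of a linear combination of random variables.

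First I would introduce the stacked estimator vector
\begin{equation*}
  \hat{\vec{Q}} =
  \begin{bmatrix}
    \hat{\vec{Q}}^1 \\ \hat{\vec{Q}}^2 \\ \vdots \\ \hat{\vec{Q}}^K
  \end{bmatrix}, \qquad
  \hat{\vec{Q}}^k =
  \begin{bmatrix}
    \hat{Q}_{\vec{\lambda}^k_1}^{k} \\ \vdots \\ \hat{Q}_{\vec{\lambda}^k_{n_k}}^{k}
  \end{bmatrix},
\end{equation*}
of total length $\sum_{k=1}^K n_k$, matching the length of the concatenated weight vector $\vec{\beta}$ in the statement. Using the first form of the GACV in Definition~\ref{def:gacv}, namely $\hat{Q}^{GACV}(\set{B}) = \sum_{k=1}^{K} \sum_{\ell \in \set{S}^k} \vec{\beta}_{\vec{\lambda}^{k^{-1}}(\ell)}^{k}\hat{Q}_{\ell}^{k}$, I would immediately identify
\begin{equation*}
  \hat{Q}^{GACV}(\set{B}) = \vec{\beta}^T \hat{\vec{Q}} = \sum_{k=1}^K \sum_{i=1}^{n_k} \vec{\beta}_i^k \, \hat{Q}_{\vec{\lambda}_i^k}^{k},
\end{equation*}
which is merely a relabeling under the lexicographic ordering induced by $\vec{\lambda}^k$.

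Next I would invoke the standard property that for any deterministic vector $\vec{\beta}$ and random vector $\hat{\vec{Q}}$, $\vari{\vec{\beta}^T \hat{\vec{Q}}} = \vec{\beta}^T \covi{\hat{\vec{Q}}} \vec{\beta}$. It then remains to verify that $\covi{\hat{\vec{Q}}}$ is exactly the block matrix $\mat{C}$ given in~\eqref{eq:mat_C}. This is a direct consequence of the block definition of covariance of a stacked vector: the diagonal blocks are $\vari{\hat{\vec{Q}}^k} = \mat{C}^k$, which matches the within-group covariance matrix, and the off-diagonal blocks are $\covi{\hat{\vec{Q}}^k, \hat{\vec{Q}}^{k'}} = \mat{C}^{kk'}$, which matches the between-group covariance matrix, with the symmetric lower-triangular blocks suppressed via the ``Sym'' notation.

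There is no real obstacle here; the main care required is simply bookkeeping, namely ensuring that the ordering of entries inside each $\hat{\vec{Q}}^k$ induced by the multi-index $\vec{\lambda}^k$ is consistent with both the ordering of entries of $\vec{\beta}^k$ and the ordering of rows and columns used to define $\mat{C}^k$ and $\mat{C}^{kk'}$. Since the same lexicographic ordering is used throughout Section~\ref{sec:notation}, this consistency is automatic, and the conclusion $\vari{\hat{Q}^{GACV}(\set{B})} = \vec{\beta}^T \mat{C} \vec{\beta}$ follows.
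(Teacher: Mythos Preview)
Your proposal is correct and follows essentially the same approach as the paper: both recognize the GACV as a linear combination of the grouped estimators and apply the standard identity for the variance of a linear form, then identify the resulting covariance structure with the block matrix $\mat{C}$. Your version is simply a bit more explicit about the stacking and ordering bookkeeping, which the paper summarizes as ``vectorizing this expression.''
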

  \begin{proof}
    The proof simply applies the property of covariance of a sum of random variables.
      \begin{align}
    \vari{\hat{Q}^{GACV}(\set{B})} &= \sum_{k=1}^K \sum_{\ell \in \set{S}^k} \sum_{k^{\prime}=1}^K \sum_{\ell^{\prime} \in \set{S}^{k^{\prime}}} \tilde{\vec{\beta}}_{\ell}^{k} \tilde{\vec{\beta}}_{\ell^{\prime}}^{k^{\prime}} \covi{\hat{Q}_{\ell}^k,\hat{Q}_{\ell^{\prime}}^{k^{\prime}}}.
  \end{align}
  Vectorizing this expression yields the stated result.
  \end{proof}

  We now seek to find the optimal weights $\set{B}^*$ that are defined to have minimum variance and zero bias
 \begin{equation}
  \set{B}^* = \argmin_{\set{B}} \vari{\hat{Q}^{GACV}(\set{B})} \quad \text{subject to} \quad  \sum_{k=1}^K \tilde{\vec{\beta}}_{0}^{k} = 1 \quad \textrm{ and } \quad \sum_{k=1}^K \tilde{\vec{\beta}}_{\ell}^{k} = 0, \quad \text{ for } \ell=1,\ldots,L.
  \label{eq:obj_gacv}
\end{equation}
 The solution to this problem is given by the following theorem.

 \begin{theorem}[Optimal weights of the GACV]\label{thm:opt_gacv}
   The solution to~\eqref{eq:obj_gacv} is given by the stacked weight vector
  \begin{equation} \label{eq:opt_beta}
    \vec{\beta}^* =  \mat{C}^{-1} \mat{R}^T  \left(\mat{R} \mat{C}^{-1}\mat{R}^T\right)^{-1} \vec{e}^0,
  \end{equation}
  where $\vec{e}^0 \in \reals^{L+1}$ is the standard basis having the first element equal to one and the rest equal to zero,
  $\mat{C}$ is given in~\eqref{eq:mat_C}, and 
  \begin{equation}
    \mat{R} = \begin{bmatrix} \mat{R}^{1^T} & \cdots & \mat{R}^{k^T} \end{bmatrix} \in \reals^{L+1 \times \prod_{k=1}^K n_{k}}
  \end{equation}
  is a horizontal concatenation of all the restriction matrices.
  Furthermore, the corresponding estimator variance is
  \begin{equation} \label{eq:opt_var}
    \vari{\hat{Q}^{GACV}(\set{B}^*)} =  (\vec{e}^{0})^T  \left(\mat{R} \mat{C}^{-1}\mat{R}^T\right)^{-1} \vec{e}^0
  \end{equation}
\end{theorem}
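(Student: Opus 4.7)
The plan is to recognize the problem as a convex quadratic program with linear equality constraints and solve it via Lagrange multipliers. By Proposition~\ref{prop:var_gacv}, the objective $\vec{\beta}^T\mat{C}\vec{\beta}$ is convex (we are assuming $\mat{C}$ is positive definite so the KKT stationarity conditions will be both necessary and sufficient for the global optimum). The first task will be to express the unbiasedness constraints~\eqref{eq:unbiasedness} as a single matrix equation in the stacked variable $\vec{\beta}$. Using $\tilde{\vec{\beta}}^k = \mat{R}^{k^T}\vec{\beta}^k$ from~\eqref{eq:beta_tilde}, the $L+1$ constraints $\sum_{k=1}^K \tilde{\vec{\beta}}^k_\ell = \vec{e}^0_\ell$ collapse to
\begin{equation*}
  \sum_{k=1}^K \mat{R}^{k^T}\vec{\beta}^k = \mat{R}\vec{\beta} = \vec{e}^0,
\end{equation*}
where $\mat{R}$ is the horizontal concatenation given in the theorem statement.

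Next I would form the Lagrangian $\mathcal{L}(\vec{\beta},\vec{\nu}) = \vec{\beta}^T\mat{C}\vec{\beta} - 2\vec{\nu}^T(\mat{R}\vec{\beta} - \vec{e}^0)$ (the factor of two is only for notational convenience in eliminating the derivative's factor of two). Setting $\nabla_{\vec{\beta}}\mathcal{L} = 0$ gives the stationarity condition $\mat{C}\vec{\beta} = \mat{R}^T\vec{\nu}$, and since $\mat{C}$ is invertible we obtain $\vec{\beta} = \mat{C}^{-1}\mat{R}^T\vec{\nu}$. Substituting this back into the constraint $\mat{R}\vec{\beta}=\vec{e}^0$ gives $(\mat{R}\mat{C}^{-1}\mat{R}^T)\vec{\nu} = \vec{e}^0$, from which $\vec{\nu} = (\mat{R}\mat{C}^{-1}\mat{R}^T)^{-1}\vec{e}^0$ provided the Schur-type matrix $\mat{R}\mat{C}^{-1}\mat{R}^T$ is invertible. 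Substituting back yields~\eqref{eq:opt_beta}.

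For the optimal variance~\eqref{eq:opt_var}, I would exploit stationarity: at the optimum $\mat{C}\vec{\beta}^* = \mat{R}^T\vec{\nu}$, so
\begin{equation*}
  \vari{\hat{Q}^{GACV}(\set{B}^*)} = \vec{\beta}^{*T}\mat{C}\vec{\beta}^* = \vec{\beta}^{*T}\mat{R}^T\vec{\nu} = (\mat{R}\vec{\beta}^*)^T\vec{\nu} = (\vec{e}^0)^T(\mat{R}\mat{C}^{-1}\mat{R}^T)^{-1}\vec{e}^0,
\end{equation*}
using the constraint $\mat{R}\vec{\beta}^* = \vec{e}^0$ in the last step.

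The calculation is essentially textbook; the only genuine obstacle is bookkeeping --- namely, verifying that the block-stacked $\vec{\beta}$ and horizontally concatenated $\mat{R}$ really do reproduce the per-group unbiasedness constraints, and establishing the invertibility of $\mat{R}\mat{C}^{-1}\mat{R}^T$. The latter follows provided $\mat{R}$ has full row rank $L+1$, which just requires that every model index $\ell \in [L]$ appears in at least one group $\set{S}^k$ (otherwise no estimator involves $Q_\ell$ and the constraint $\sum_k \tilde{\vec{\beta}}^k_\ell = \vec{e}^0_\ell$ could not be enforced anyway); combined with positive definiteness of $\mat{C}$, this makes $\mat{R}\mat{C}^{-1}\mat{R}^T$ positive definite and hence invertible. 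I would include this full-row-rank caveat explicitly as a standing assumption.
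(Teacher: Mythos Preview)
Your proposal is correct and follows essentially the same Lagrange-multiplier approach as the paper: both form the Lagrangian for the equality-constrained quadratic program, solve the stationarity condition $\mat{C}\vec{\beta} = \pm\mat{R}^T\vec{\gamma}$, and eliminate the multiplier to obtain~\eqref{eq:opt_beta}. The only cosmetic difference is that the paper substitutes the stationarity relation back into the Lagrangian and maximizes the resulting dual $g(\vec{\gamma})$, whereas you substitute directly into the primal constraint $\mat{R}\vec{\beta}=\vec{e}^0$; your route is slightly more direct, and your explicit remarks on the full-row-rank condition for $\mat{R}$ and the resulting invertibility of $\mat{R}\mat{C}^{-1}\mat{R}^T$ fill in details the paper leaves implicit.
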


 \begin{proof}
   Using Proposition~\ref{prop:var_gacv}, the Lagrangian form of~\eqref{eq:obj_gacv} becomes 
  \begin{equation}
    L(\set{B}, \vec{\gamma}) = \vec{\beta}^T \mat{C} \vec{\beta} + 2 \vec{\gamma}^T \left(\mat{R} \vec{\beta} - \vec{e}^0\right).
  \end{equation}
  With setting the Lagrangian derivative to zero,
  \begin{equation}
    \nabla_{\set{B}} L(\set{B}, \vec{\gamma}) = 2 \mat{C}\vec{\beta} + 2 \mat{R}^T \vec{\gamma} = 0,
  \end{equation}
  we obtain $\vec{\beta} = - \mat{C}^{-1} \mat{R}^T \vec{\gamma}.$ Plugging this expression into the Lagrangian, we can then seek to maximize the dual
  \begin{equation}
    g(\vec{\gamma}) =  \vec{\gamma}\mat{R} \mat{C}^{-1} \mat{R}^T \vec{\gamma} - 2 \vec{\gamma}^T \mat{R} \mat{C}^{-1}\mat{R} \vec{\gamma} - 2 \vec{\gamma}^T \vec{e}^0 = -\vec{\gamma}\mat{R} \mat{C}^{-1} \mat{R}^T \vec{\gamma} - 2 \vec{\gamma}^T \vec{e}^0
  \end{equation}
  The maximum occurs at $\vec{\gamma} = -\mat{R} \mat{C}^{-1}\mat{R}^T \vec{e}^0.$ Using this equation in the expression for the weights, we obtain the stated result.
\end{proof}

 We can specialize this result to the case of independent groups by realizing that this change causes $\mat{C}$ to be block diagonal. Plugging this structure into Theorem~\ref{thm:opt_gacv} yields the following corollary.

 \begin{corollary}[Optimal weights under independent groups]
   The optimal weights for a GACV estimator that uses $K$ independent groups of estimators is given by 
  \begin{equation}
    \vec{\beta}^k =  \mat{C}^{k^{-1}} \mat{R}^{k}\left(\sum_{k=1}^K \mat{R}^{k^T} \mat{C}^{k^{-1}} \mat{R}^{k}\right)^{-1} \vec{e}^0,
    \label{eq:gacv_ind_weights}
  \end{equation}
  and the resulting variance is 
  \begin{equation}
    \vari{\hat{Q}^{GACV}(\set{B}^*)} = \vec{e}^{0^T} \left(\sum_{k=1}^K \mat{R}^{k^T} \mat{C}^{k^{-1}} \mat{R}^{k}\right)^{-1}  \vec{e}^0.
    \label{eq:gacv_ind_variance}
  \end{equation}
 \end{corollary}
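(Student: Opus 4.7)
The plan is to specialize Theorem~\ref{thm:opt_gacv} by exploiting the block-diagonal structure that independence imposes on $\mat{C}$. The proof is essentially a careful bookkeeping exercise, so I expect no substantive obstacle; the only thing to get right is lining up the block-wise products.

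First, I would observe that independence across groups means $\mat{C}^{kk'} = \mat{0}$ for every $k \neq k'$, so the global covariance matrix in \eqref{eq:mat_C} reduces to the block-diagonal form $\mat{C} = \mathrm{blockdiag}(\mat{C}^1, \ldots, \mat{C}^K)$. Its inverse is then simply $\mat{C}^{-1} = \mathrm{blockdiag}(\mat{C}^{1^{-1}}, \ldots, \mat{C}^{K^{-1}})$, provided each within-group covariance is invertible (the standing assumption in Theorem~\ref{thm:opt_gacv}).

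Next, I would compute $\mat{R}\mat{C}^{-1}\mat{R}^T$ explicitly. Since $\mat{R} = [\mat{R}^{1^T}\ \cdots\ \mat{R}^{K^T}]$ is a horizontal concatenation, its transpose is the vertical stack of the $\mat{R}^k$. Multiplying out against the block-diagonal $\mat{C}^{-1}$ yields the telescoping identity
\begin{equation*}
\mat{R}\mat{C}^{-1}\mat{R}^T = \sum_{k=1}^{K} \mat{R}^{k^T} \mat{C}^{k^{-1}} \mat{R}^{k},
\end{equation*}
since only the diagonal blocks contribute. Substituting this directly into the variance formula~\eqref{eq:opt_var} of Theorem~\ref{thm:opt_gacv} gives \eqref{eq:gacv_ind_variance}.

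Finally, I would recover the per-group weights. The stacked optimal weight vector from Theorem~\ref{thm:opt_gacv} is $\vec{\beta}^* = \mat{C}^{-1}\mat{R}^T(\mat{R}\mat{C}^{-1}\mat{R}^T)^{-1}\vec{e}^0$. By block-diagonality of $\mat{C}^{-1}$ and the vertical stacking of $\mat{R}^T$, the product $\mat{C}^{-1}\mat{R}^T$ is itself a vertical stack whose $k$-th block is $\mat{C}^{k^{-1}}\mat{R}^k$. Reading off the $k$-th block of $\vec{\beta}^*$ and using the expression for $\mat{R}\mat{C}^{-1}\mat{R}^T$ derived above yields \eqref{eq:gacv_ind_weights}, completing the proof. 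The only mild subtlety worth flagging in the write-up is that the corollary writes $\mat{R}^k$ (not $\mat{R}^{k^T}$) to the right of $\mat{C}^{k^{-1}}$, which is consistent once one tracks the dimensions $\mat{C}^{k^{-1}} \in \reals^{n_k \times n_k}$ and $\mat{R}^k \in \reals^{n_k \times (L+1)}$.
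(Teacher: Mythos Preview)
Your proposal is correct and follows precisely the approach the paper takes: the paper simply notes that independence makes $\mat{C}$ block diagonal and then invokes Theorem~\ref{thm:opt_gacv}, which is exactly what you spell out in detail. Your explicit block-wise computation of $\mat{R}\mat{C}^{-1}\mat{R}^T$ and the extraction of the $k$-th block of $\vec{\beta}^*$ is a faithful expansion of the paper's one-line justification.
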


Note that while these results look similar to the ML-BLUE optimal weights and variance, they are actually a generalized case because they handle arbitrary estimators. The precise ML-BLUE estimator is recovered by assuming Monte Carlo estimators of the expectation. In this case, the covariance of each group of estimators $\mat{C}^k$ is directly related to the covariance of the underlying random variables $\mat{\hat{C}}^k$ in the group through the simple relation $\mat{C}^k = \frac{1}{m^k} \mat{\hat{C}}^k$. Plugging this relation into~\eqref{eq:gacv_ind_weights} and~\eqref{eq:gacv_ind_variance} recovers the corresponding solution to the least-squares problem described in~\cite[Eq. 2.7 and 2.8]{schaden2020}. 
\begin{corollary}[Optimal ML-BLUE estimator]\label{corr:blue_variance}
  Let each estimator within a group be a Monte Carlo estimator for the expectation using shared samples across the group $\hat{Q}_{\ell}^k = \frac{1}{m_k} \sum Q_{\ell}(z_{i}^k).$ Furthermore, let the covariance of all models within a group be $\mat{\hat{C}}^k$ and let each of the $K$ groups use independent samples. Then the optimal GACV estimator is given by
  \begin{equation}
    \vec{\beta}^k =  m^{k} \mat{\hat{C}}^{k^{-1}} \mat{R}^{k}\left(\sum_{k=1}^K m^{k} \mat{R}^{k^T}\mat{\hat{C}}^{k^{-1}} \mat{R}^{k}\right)^{-1} \vec{e}^0,
  \end{equation}
  and the resulting variance is 
  \begin{equation}
    \vari{\hat{Q}^{GACV}(\set{B}^*)} = \vec{e}^{0^T} \left(\sum_{k=1}^K m^k \mat{R}^{k^T} \mat{\hat{C}}^{k^{-1}} \mat{R}^{k}\right)^{-1}  \vec{e}^0.
  \end{equation}
\end{corollary}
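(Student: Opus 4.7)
The plan is to derive this corollary as a direct specialization of the preceding independent-groups corollary, reducing the work to a clean substitution. The two hypotheses of the corollary (shared samples within a group, independent samples across groups) map precisely onto the structural assumptions needed: the intra-group shared-sample structure determines $\mathbf{C}^k$ in terms of $\hat{\mathbf{C}}^k$, and the inter-group independence makes the full covariance matrix $\mathbf{C}$ block diagonal so that the independent-groups formulas apply.

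First I would establish the key identity $\mathbf{C}^k = \frac{1}{m_k}\hat{\mathbf{C}}^k$. This follows from the standard Monte Carlo calculation: for $\hat{Q}_\ell^k = \frac{1}{m_k}\sum_{i=1}^{m_k} Q_\ell(z_i^k)$ with $\{z_i^k\}$ i.i.d., the bilinearity of covariance gives $\mathrm{Cov}[\hat{Q}_\ell^k, \hat{Q}_{\ell'}^k] = \frac{1}{m_k}\mathrm{Cov}[Q_\ell, Q_{\ell'}]$, since only the diagonal $i=i'$ terms survive. Inverting yields $\mathbf{C}^{k^{-1}} = m^k\hat{\mathbf{C}}^{k^{-1}}$.

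Next I would invoke the independent-groups corollary directly. The assumption that the $K$ groups use independent samples implies $\mathbf{C}^{kk'} = 0$ for $k \ne k'$, so $\mathbf{C}$ is block diagonal with blocks $\mathbf{C}^k$, and the hypotheses of the independent-groups corollary are met. Substituting $\mathbf{C}^{k^{-1}} = m^k \hat{\mathbf{C}}^{k^{-1}}$ into~\eqref{eq:gacv_ind_weights} gives the stated expression for $\vec{\beta}^k$, where the factor $m^k$ appears both from the lead $\mathbf{C}^{k^{-1}}$ and inside every term of the summation $\sum_k \mathbf{R}^{k^T}\mathbf{C}^{k^{-1}}\mathbf{R}^k$. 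An identical substitution into~\eqref{eq:gacv_ind_variance} produces the variance formula.

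There is no real obstacle here; the work is confirming the substitution is consistent. The only subtlety worth flagging is that one must be careful that the factor $m^k$ appears the right number of times on each side, which is why I would write out the substitution carefully rather than gloss over it. A concluding remark would note that the resulting expressions coincide with~\cite[Eq.~2.7 and~2.8]{schaden2020}, thereby recovering the ML-BLUE weights and variance from a pure variance-minimization argument rather than a least-squares regression derivation.
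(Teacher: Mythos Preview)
Your proposal is correct and follows essentially the same approach as the paper: the paper simply notes the relation $\mat{C}^k = \frac{1}{m^k}\mat{\hat{C}}^k$ for shared-sample Monte Carlo estimators and substitutes it into the independent-groups formulas~\eqref{eq:gacv_ind_weights} and~\eqref{eq:gacv_ind_variance}, which is precisely your plan. Your additional explicit justification of the covariance identity and the closing remark about recovering~\cite[Eq.~2.7 and~2.8]{schaden2020} mirror the paper's surrounding discussion.
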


\section{Numerical experiments}~\label{sec:benefits}

In this section, we turn to an empirical investigation of the benefits of non-independent group ACV estimators, as compared to the special case of an independent group ML-BLUE estimator. Our initial aim is to simply demonstrate {\it existence} of cases where non-independent groupings are beneficial. The ramifications of this investigation suggest that (1) significant open questions remain regarding how best to distribute resources in a linear multifidelity variance reduction scheme; (2) the concept of a BLUE 
must be understood to refer to the best of a certain sub-class of linear models, not a general statement pertaining to all possible estimators formed by linear combinations of model evaluations.

We begin this investigation in Section~\ref{sec:acv_is_acv_mf} by comparing two classical ACV estimators, ACV-IS and ACV-MF, the former of which is an ML-BLUE.
Specifically, we show that there exists combinations of correlations and model costs for which removing the assumption of grouping independence required by ML BLUE results in smaller estimator variance. Next, in Section~\ref{sec:general}, we demonstrate how model groupings of an ML-BLUE can be converted into a non-independent grouping estimator, and that again there exists settings where the non-independent GACV has better variance reduction.

\subsection{ACV-IS vs. ACV-MF}~\label{sec:acv_is_acv_mf}

As an initial example, we demonstrate the benefit of moving beyond the independent groupings used by ML-BLUE and employing groups with shared (non-independent) samples by comparing the ACV-IS and ACV-MF estimators.
As shown in~\cite{schaden2020}, for cases with more than two models, the ACV-IS estimator is a BLUE 
while the ACV-MF estimator is not. Both of these estimators are special cases of the GACV.

The ACV-IS discussed in~\cite{schaden2020} consists of the groups $\set{S}= \{ [L] \} \cup_{i=1}^L \{\{i\}\},$ and takes the form
\begin{equation}
  \hat{Q}^{ACV-IS} = \frac{1}{m^1} \sum_{i=1}^{m^1} Q(z^1_i) + \sum_{\ell=1}^L \alpha_{\ell} \left( \frac{1}{m^1} \sum_{i=1}^{m^1} Q_{\ell}(z^1_i) - \frac{1}{m^{\ell}} \sum_{i=1}^{m^{\ell}} Q_{\ell}(z^{\ell}_i)\right), \quad \text{ where } \set{Z}^i \bigcap \set{Z}^j = \emptyset \text{ when } i \neq j.
\end{equation}
ACV-MF~\cite{gorodetsky2020} consists of the same groups, but the second term of the control variate re-uses the input-samples that were used by the higher fidelity models:
\begin{equation}
  \hat{Q}^{ACV-MF} = \frac{1}{m^1} \sum_{i=1}^N Q(z^1_i) + \sum_{\ell=1}^L \alpha_{\ell} \left( \frac{1}{m^1} \sum_{i=1}^N Q_{\ell}(z^1_i) - \frac{1}{{m}^{\ell}} \sum_{i=1}^{{m}^{\ell}} Q_{\ell}(z^{\ell}_i)\right), \quad \text{ where } \set{Z}^i \subset \set{Z}^j \text{ when } i < j.
\end{equation}
Because the samples are reused, the estimators in each group are not independent and therefore ACV-MF is not a BLUE based on the definition in~\cite{schaden2020}. %
Note also that the number of samples in the groups for ACV-MF is higher than in ACV-IS, even though the overall cost is the same.

We perform several experiments on the three model setting where both the ACV-IS and ACV-MF sets are
\[\set{S}=\{\{{0, 1, 2\}, \{1\}, \{2}\}\}.\]
In this setting, we consider an equal cost allocation across the two estimators. Thus, we let $n$ denote the total number of high-fidelity evaluations, $m_1$ denote the total number of evaluations of the first low-fidelity model $Q_1$, and $m_2$ denote the total number of evaluations of the second low-fidelity model $Q_2$. This setup implies the following number of samples per group: for ACV-IS, group 1 has $m^1 = n$, group two has a unique set $m^2 = m_1 - n$, and group three has a unique set $m^3 = m_2 - n$; and for ACV-MF, group 1 has $m^1 = n$, group two has all available mid-fidelity samples $m^2 = m_1$, and group three has all available low-fidelity samples $m^3 = m_2$.

We compare the variance of these estimators in Figure~\ref{fig:acv_is_vs_acv_mf} for $n=5$, a range of $m_1$ and $m_2$, and two different correlation settings.\footnote{The variance of each model does not affect the ratio of estimator variances.} The left panel corresponds to a correlation structure where the correlation between $Q$ and $Q_2$ is lower than the correlation between $Q_1$ and $Q_3$, whereas the right panel has the opposite structure. The qualitative behavior between the two cases is different. In the left panel, the benefit of ACV-MF over ACV-IS is maximized for smaller numbers of $Q_1$ samples and larger numbers of $Q_2$ samples. The red lines in this plot indicate unity contours which correspond to transitions between estimator preference. Thus there exist a large regime where ACV-MF has lower variance than ACV-IS for this case. The right panel indicates that this behavior is highly dependent on the correlation structure. Here, ACV-IS has lower variance over all sample allocations, but its benefit decreases as the number of samples of $Q_1$ increases. 

Overall, this example demonstrates that just because an estimator is a BLUE does not mean that it makes the best use of all available resources. Clearly, there exist cases where reusing samples across groups yields better variance reduction for a given budget. In the next section, we show more generally how a GACV estimator that reuses samples can be constructed automatically from certain common group and sample allocation schemes used in ML-BLUE, and that this construction can yield variance reduction.

\begin{figure}
  \centering
  \begin{subfigure}{0.45\textwidth}
    \centering
    \includegraphics[scale=1]{{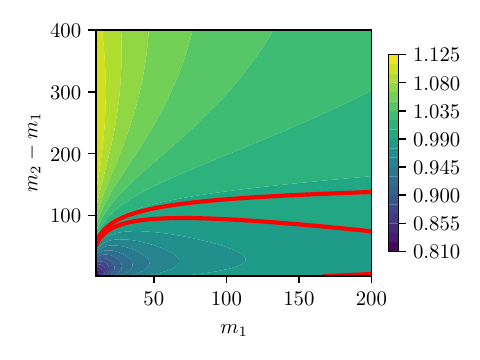}}
    \caption{$\rho_{01} = 0.95, \rho_{02} = 0.8, \rho_{12} = 0.9.$}
  \end{subfigure}
  ~
  \begin{subfigure}{0.45\textwidth}
    \centering
    \includegraphics[scale=1]{{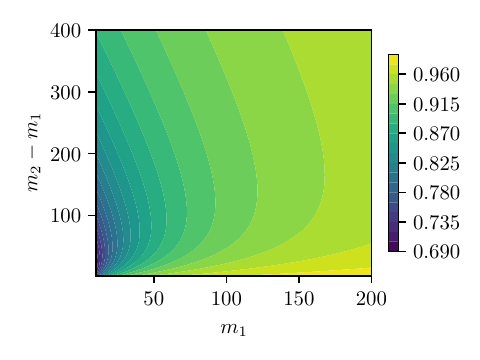}}
    \caption{$\rho_{01} = 0.95, \rho_{02} = 0.93, \rho_{12} = 0.9.$}
  \end{subfigure}
  \caption{Ratio of estimator variance $\vari{\hat{Q}^{ACV-IS}} / \vari{\hat{Q}^{ACV-MF}}$ for fixed cost using a three model setting for two choices of correlation structures as a function of number of evaluations of each model. The $y$ axis is the number of additional samples used for $Q_2$ over $Q_1$. 
     Red lines indicate a variance ratio of one, values higher than one indicate the ACV-MF estimator obtains greater variance reduction. These results indicate that ACV-MF estimator can out-perform the ACV-IS estimator (which is BLUE), depending on correlation and computational cost of models.}
  \label{fig:acv_is_vs_acv_mf}
\end{figure}

\subsection{Converting an arbitrary ML-BLUE estimator into a nested GACV estimator}\label{sec:general}

In this section we first provide an algorithm that converts certain ML-BLUE sample allocations into GACV sample allocations. We then show numerical experiments that indicate that this GACV estimator can outperform an optimal ML-BLUE estimator.

\subsubsection{Nested sample grouped estimator}

In situations with well-correlated approximate models, the optimal groups for the ML-BLUE estimator often follow a recursive structure. For example, in the case where all combination of models in a group are allowed, the optimal structure often becomes fully nested so that $\set{S}^1 = [L]$, and $\set{S}^{k} = \{k-1,\ldots,L\}$ for $k = 2,\ldots, L+1$ ~\cite{schaden2020,schaden2021}. Because it may be computationally expensive to search over the space of full models, the SAOB-M estimator was introduced in the same work to limit each group to at most $1 \leq M \leq L+1$ models. In this case, there still exists a specific type of nested structure $\set{S}^1 = [M-1]$, and $\set{S}^{k} = \{k-1,\ldots,\min(k+M-2, L)\}$ for $k=2,\ldots,L+1$ --- see e.g.,~\cite[Fig. 6]{schaden2021}. Note that SAOB-L would correspond to the fully nested estimator. A second viewpoint of the same structure is that the $\ell$-th model belongs to groups $\set{S}^{\max(\ell-M,1)}, \ldots, \set{S}^{\ell+1}.$ As an example, for a SAOB-3 estimator for $L=4$, the groups are
\begin{equation}
  \set{S}^1 = \{0,1,2\}, \quad \set{S}^2 = \{1,2,3\}, \quad \set{S}^3 = \{2,3,4\}, \quad \set{S}^4 = \{3,4\}, \quad \set{S}^5 = \{4\}.
\end{equation}

In this section, we provide a non-independent GACV estimator that has these identical groupings, but re-distributes the sample allocation amongst non-independent groups to achieve an estimator of equivalent cost. Consider a (potentially optimal) sample allocation scheme for an ML-BLUE estimator that suggests $m^k$ samples for group $\set{S}^k.$ Our nested estimator will determine new sample allocations $\hat{m}^k$ that respects a certain nested sample structure. Specifically, if $\hat{Z}^k$ are the samples used for group $k$, then we construct GACV estimators that use $\set{Z}^{k+1} \subset \set{Z}^k$. Moreover, each of these groupings will now have $\hat{m}^k = \left| \set{Z}^k \right|$ samples.

To enable a valid comparison, we must ensure that the total number of evaluations of each model is consistent between the ML-BLUE estimator and the GACV estimator with equivalent model groups but a nested sample structure that induces dependencies. In other words, we construct a mapping $\hat{m}^k = f_k(m^1,\ldots,m^k)$ such that the total number of evaluations of each model are equal. With this goal, let $n^\ell$ define the total number of samples of model $\ell$ such that for a ML-BLUE allocation
\begin{equation}
  n^\ell = \sum_{k=1, \  \ell \in \set{S}^k }^{L+1} m^k = \sum_{k=\max(\ell-M+2,1)}^{\ell+1} m^k
  \label{eq:ml_blue_alloc}
\end{equation}
and for the associated GACV allocation
\begin{equation}
  n^{\ell} = \max\{\hat{m}^k: \ \text{s.t.} \ \ell \in \set{S}^k  \} = \hat{m}^{\ell+1},
  \label{eq:gacv_alloc}
\end{equation}
for $\ell = 0,\ldots,L.$ The last equality for the GACV allocation exploits the nested property of the input sets for that estimator. We begin with $\hat{m}^1 = m^1.$ Then
Equating~\eqref{eq:ml_blue_alloc} with~\eqref{eq:gacv_alloc} we obtain
\begin{equation}
  \hat{m}^{\ell+1}= \sum_{k=\max(\ell-M+2,1)}^{\ell+1} m^k, \quad \ell = 1, \ldots, L.
  \label{eq:conversion}
\end{equation}
Table~\ref{tab:example_alloc} shows a concrete example for a prototypical example of a 5 model case with a maximum grouping of three models $(M=3)$.

\begin{table}
  \centering
  \caption{Example SAOB-3 allocation for a 5 model case ($L=4$). The ML-BLUE allocation is [5,5,5,7,18] for groups $\set{S}^1$ through $\set{S}^5$. The number of evaluations of each model in each group is shown for both the ML-blue allocation and the equivalent GACV allocation using~\eqref{eq:conversion}. Each element shows the ML-BLUE (MLB) allocation and the GACV allocation (GACV) for each model in each group. The bottom row shows that the sum of the number evaluations is equal using~\eqref{eq:ml_blue_alloc} and~\eqref{eq:gacv_alloc}.}~\label{tab:example_alloc}
  \begin{tabular}{|l|c|c|c|c|c|c|c|c|c|c|}
    \hline 
    \backslashbox{Group}{Model}  & \multicolumn{2}{c|}{0}  &  \multicolumn{2}{c|}{1}   &  \multicolumn{2}{c|}{2}    &  \multicolumn{2}{c|}{3}   &  \multicolumn{2}{c|}{4}     \\
    \hline
                        & MLB & GACV & MLB & GACV & MLB & GACV & MLB & GACV & MLB & GACV  \\
    \hline
    \hline
    $\set{S}^1$  & 5 & 5   & 5 & 5    & 5 & 5   & 0 & 0   & 0 & 0     \\
    $\set{S}^2$  & 0 & 0   & 5 & 10  & 5 & 10 & 5 & 10 & 0 & 0     \\
    $\set{S}^3$  & 0 & 0   & 0 & 0    & 5 & 15 & 5 &15  & 5 & 15   \\
    $\set{S}^4$  & 0 & 0   & 0 & 0    & 0 & 0   & 7 & 17 & 7 & 17   \\
    $\set{S}^5$  & 0 & 0  & 0  & 0    & 0 & 0   & 0  & 0  & 18 & 30  \\
    \hline
    \hline
    Number of evals $n^\ell$      &  5 & 5  & 10 & 10  & 15 & 15   & 17 & 17  & 30 & 30 \\
    \hline
  \end{tabular}
\end{table}
To summarize, an allocation for a SAOB-M grouping can be converted into a concrete GACV nested sample estimator according to Algorithm~\ref{alg:nested_gacv}. Thus, this algorithm can serve as a drop-in replacement for ML-BLUE estimators with this structure. 
%
\begin{algorithm}
	\caption{Nested-sample GACV Monte Carlo estimator conversion from SAOB-M estimator}
	\label{alg:nested_gacv}
	\begin{algorithmic}[1]
		\Input
        \Desc{$M$}{$\quad$ Maximum size of model group in SAOB-M estimator} 
		\Desc{$m^k$ for $k=1,\ldots,L+1$}{$\quad$ Sample allocations for each SAOB-Mgroup} 
		\EndInput
		\Output 
		\Desc{$\hat{Q}_{\ell}^k$ for $\ell=0,\ldots,L$ and $k=1,\ldots,L+1$}{$\quad$ Nested-sample estimators within the GACV~\eqref{eq:GACV}}
        \Desc{$\hat{m}^k$ for $k=1,\ldots,L+1$}{$\quad$ Number of samples in equivalent cost nested GACV estimator} 
		\EndOutput
        \State \textcolor{grey}{\textbf{Compute the number of evaluations of each group}}
        \State $\hat{m}^1 = m^1$
		\For{$\ell=1$ to $L$}
		\State $\hat{m}^{\ell+1} \leftarrow $ Equation~\eqref{eq:conversion}
		\EndFor
		\State $\vec{z} = \{z_1,\ldots, z_{\hat{m}^{L+1}}\}$ \Comment{Generate all inputs to be used}
		\State \textcolor{grey}{\textbf{Compute the estimators in each group}}
		\For {$\ell=0$ to $L$}
		\For {$k = 1$ to $L+1$}
		\If {$\ell \in \set{S}^k$}
		\State $\hat{Q}_{\ell}^k = \frac{1}{\hat{m}^k} \sum_{a=1}^{\hat{m}^{k}} Q_{\ell}(z_{a})$
		\EndIf
		\EndFor
		\EndFor	
	\end{algorithmic}
\end{algorithm}
The variance of this estimator is given by the following corollary.
\begin{corollary}[Nested-sample GACV estimator from independent-sample ML BLUE]\label{corr:nested_gacv_samples}
Let $\mat{\hat{C}}^k$, $k=1,\ldots,K$ denote the covariance between the models within group $k$ and $\mat{\hat{C}}^{kk^{\prime}}$ denote the covariance of models between models of group $k$ and $k^{\prime}$.  Then the optimal GACV estimator computed using Algorithm~\ref{alg:nested_gacv} is given by optimal weights~\eqref{eq:opt_beta} with corresponding variance~\eqref{eq:opt_var}, where the covariance matrix is
 \begin{equation}
    \mat{C} =     \begin{bmatrix}
      \frac{1}{\hat{m}^1}\mat{\hat{C}}^1   &  \frac{1}{\max(\hat{m}^1, \hat{m}^{2})}  \mat{\hat{C}}^{12}& \cdots &              &  \frac{1}{\max(\hat{m}^1, \hat{m}^{K})}\mat{\hat{C}}^{1K}\\
                  & \frac{1}{\hat{m}^2} \mat{\hat{C}}^2  &  \frac{1}{\max(\hat{m}^2, \hat{m}^{3})} \mat{\hat{C}}^{23}  & \cdots          & \vdots\\
                  &            & \ddots &              &        \\
                  &            &        & \frac{1}{\hat{m}^{K-1}}\max{\hat{C}}^{K-1} &  \frac{1}{\max(\hat{m}^{K-1}, \hat{m}^{K})} \mat{\hat{C}}^{(K-1)K}\\
      \text{Sym}  &            &        &              & \frac{1}{\hat{m}^K}\mat{\hat{C}}^K
    \end{bmatrix}
\end{equation}	
\end{corollary}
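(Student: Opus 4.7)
The plan is to apply Theorem~\ref{thm:opt_gacv} directly: since~\eqref{eq:opt_beta} and~\eqref{eq:opt_var} already deliver the optimal weights and variance as functions of the block covariance matrix $\mat{C}$ defined in~\eqref{eq:mat_C}, the only real task is to show that the nested-sample construction of Algorithm~\ref{alg:nested_gacv} induces the specific block structure claimed in the corollary.

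First, I would unpack the algorithm. Each estimator takes the form $\hat{Q}_{\ell}^k = \frac{1}{\hat{m}^k}\sum_{a=1}^{\hat{m}^k} Q_{\ell}(z_a)$ drawn from a common i.i.d.\ pool $\{z_1,\ldots,z_{\hat{m}^{L+1}}\}$, so group $k$ always uses the first $\hat{m}^k$ inputs. Consequently, groups $k$ and $k^{\prime}$ share exactly $\min(\hat{m}^k,\hat{m}^{k^{\prime}})$ samples, namely the prefix common to both. For the diagonal blocks, the standard Monte Carlo variance calculation for i.i.d.\ sample averages gives $\covi{\hat{Q}_{\ell}^k,\hat{Q}_{\ell^{\prime}}^k} = \tfrac{1}{\hat{m}^k}\covi{Q_{\ell},Q_{\ell^{\prime}}}$, so $\mat{C}^k = \tfrac{1}{\hat{m}^k}\mat{\hat{C}}^k$.

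For the off-diagonal blocks, I would expand
\begin{equation*}
\covi{\hat{Q}_{\ell}^k,\hat{Q}_{\ell^{\prime}}^{k^{\prime}}} = \frac{1}{\hat{m}^k \hat{m}^{k^{\prime}}}\sum_{a=1}^{\hat{m}^k}\sum_{b=1}^{\hat{m}^{k^{\prime}}} \covi{Q_{\ell}(z_a),Q_{\ell^{\prime}}(z_b)}.
\end{equation*}
Independence of distinct draws annihilates every $a\neq b$ term, and only indices $a \leq \min(\hat{m}^k,\hat{m}^{k^{\prime}})$ lie in both groups. This leaves $\min(\hat{m}^k,\hat{m}^{k^{\prime}})$ nonzero terms, each equal to $\covi{Q_{\ell},Q_{\ell^{\prime}}}$. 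Applying the elementary identity $\min(a,b)/(ab) = 1/\max(a,b)$ then yields $\mat{C}^{kk^{\prime}} = \tfrac{1}{\max(\hat{m}^k,\hat{m}^{k^{\prime}})}\mat{\hat{C}}^{kk^{\prime}}$, exactly the stated off-diagonal block.

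Assembling these diagonal and off-diagonal blocks into the form~\eqref{eq:mat_C} and substituting into Theorem~\ref{thm:opt_gacv} completes the proof. No step is genuinely hard; the only place that requires care is the combinatorial bookkeeping for cross-group covariances --- specifically, recognizing that the prefix-sharing pattern fixes the shared sample count at $\min(\hat{m}^k,\hat{m}^{k^{\prime}})$, from which the $1/\max$ scaling of the off-diagonal blocks is immediate via the reciprocal identity.
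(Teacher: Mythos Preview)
Your proposal is correct and follows essentially the same approach as the paper: both reduce the corollary to computing the block covariance $\mat{C}$ of the nested Monte Carlo estimators and then invoke Theorem~\ref{thm:opt_gacv}, with the off-diagonal computation hinging on the same observation that only the $\min(\hat{m}^k,\hat{m}^{k^{\prime}})$ shared samples contribute, yielding the $1/\max(\hat{m}^k,\hat{m}^{k^{\prime}})$ scaling. Your double-sum expansion with the $\min/\max$ reciprocal identity is marginally more explicit than the paper's WLOG-and-split argument, but the substance is identical.
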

\begin{proof}
The Monte Carlo sampling strategy implies that the covariance of the estimators are directly related to the covariance of the underlying models. Specifically,  The covariances in $\mat{C}^{kk^{\prime}}$ become
\begin{equation}
  \covi{\hat{Q}_{\lambda_i^k}^{k}, \hat{Q}_{\lambda_{j}^{k^\prime}}^{k^{\prime}}} = \covi{\frac{1}{\hat{m}^k} \sum_{a=1}^{\hat{m}^k}Q_{\lambda_{i}^k}( z_{a} ), \frac{1}{\hat{m}^{k^{\prime}}}  \sum_{a=1}^{\hat{m}^{k^{\prime}}} Q_{\lambda_{j}^{k^{\prime}}}( z_{a} )}
\end{equation}
Assume without loss of generality that $\hat{m}^k \leq \hat{m}^{k^{\prime}}$. Then we have
\begin{align}
  \covi{\hat{Q}_{\lambda_i^k}^{k}, \hat{Q}_{\lambda_{j}^{k^\prime}}^{k^{\prime}}} &= \covi{\frac{1}{\hat{m}^k} \sum_{a=1}^{\hat{m}^k}Q_{\lambda_{i}^k}( z_{a} ), \frac{1}{\hat{m}^{k^{\prime}}}  \sum_{a=1}^{\hat{m}^{k}} Q_{\lambda_{j}^{k^{\prime}}}( z_{a} )
  + \frac{1}{\hat{m}^{k^{\prime}}}  \sum_{a=\hat{m}^{k}+1}^{\hat{m}^{k^{\prime}}} Q_{\lambda_{j}^{k^{\prime}}}( z_{a})} \\
  &= \frac{1}{\hat{m}^k \hat{m}^{k^{\prime}}} \covi{ \sum_{a=1}^{\hat{m}^k}Q_{\lambda_{i}^k}( z_{a} ), \sum_{a=1}^{\hat{m}^{k}} Q_{\lambda_{j}^{k^{\prime} } }}
 = \frac{1}{\max(\hat{m}^k, \hat{m}^{k^{\prime}})} \covi{Q_{\lambda_i^k}, Q_{\lambda_{j}^k}},
\end{align}
while the covariances in $\mat{C}^k$ are the same as those in Corollary~\ref{corr:blue_variance}. In this case we would have  $\mat{C}^{kk^{\prime}} = \frac{1}{\max(\hat{m}^k, \hat{m}^{k^{\prime}})} \mat{\hat{C}}^{kk^{\prime}}.$ Combining these results yields the stated covariance. 
\end{proof}

\subsubsection{Numerical demonstration}

We now provide an empirical comparison between a SAOB-M estimator and the derived nested-sample GACV estimator. Because the relative performance of the ML BLUE and nested group ACV estimators depends on the covariance between models and the relative computational cost of each model, we compared the variance of these two estimators for 1000 randomly generated covariance matrices $\mat{\hat{C}}$ and cost vectors $\vec{w}$. The covariance matrices were generating using the \texttt{scipy.stats}~\cite{SciPy-NMeth_2020} function \texttt{random\_correlations} with arguments that generated correlations roughly between 0.8 and 1. Moreover, we fixed the cost of the high-fidelity model to 1, and randomly drew the computational costs of each low fidelity model I.I.D. from the log uniform distribution over the interval $[0.01, 1]$. The covariance matrices were ordered such that the covariance between the high-fidelity model the low-fidelity models decreases with the column index of the matrix. This ordering was enforced to reflect the hierarchical nature of the model groups we considered. The randomly generated costs were also sorted from highest to lowest for the same reason.

For each model setting (covariance and cost) sampled above, we first perform an optimal sample allocation for ML-BLUE, ensure that it assigns samples to each of the model groups, and then compute a derived nested-sample GACV estimator. After this procedure, we compare the theoretical estimator variance for these allocations.

To be specific, the optimal ML-BLUE sample allocation is obtained by solving a semi-definite-program (SDP)~\cite{Croci_WW_CMAME_2023}
\begin{align}\label{eq:mlblue-sdp}
  t^*, \vec{m}^* = \arg \min_{t, \vec{m}\ge 0} \quad \text{s.t} \quad \begin{bmatrix}\sum_{k=1}^K m^k \mat{R}^{k^T} \mat{\hat{C}}^{k^{-1}} \mat{R}^{k} & \vec{e}^0 \\ \vec{e}^{0^T} & t \end{bmatrix} \succeq 0, \qquad \vec{m}^T\vec{w} \le W, \qquad \vec{m}^T\vec{h} \ge 1,
\end{align}
where $\vec{m} = [m^1,\ldots,m^{K}]$ is the sample allocation, $W$ is the total allowable computational cost, $\vec{w}=[w_0,w_1,\ldots,w_L]\in\reals^{L+1}$ is the cost of each model; and $\vec{h}\in\reals^K$ is a vector with $v_k=1$ if the high-fidelity model is in $\set{S}^k$, and $v_k=0$ otherwise.

The following numerical results were generated using the \textrm{PyApprox} package~\cite{PyApprox}.
Figure~\ref{fig:nested_comparison} displays a histogram of the variance ratio of the ML-BLUE SAOB estimator compared to the derived nested GACV estimator for various $(L,M)$ combinations. Note that $(L,L+1)$ would refer to a fully recursive grouping. The main insight from this graph is that there exists a set of problem conditions under which non-independent groupings with sample reuse perform better than independent groupings without sample reuse. Under the experimental conditions explored here, this occurs a majority of the time, with seemingly greater benefit of GACV as $M$ is lowered compared to $L$. Not visible on this plot, as it is hidden behind the line at unity, is the fully recursive case with $(L,M) =(4,5)$ which yields identical performance between the ML-BLUE and nested GACV estimators. Further theoretical investigation is needed to identify what properties of the covariance matrix in Corollary~\ref{corr:nested_gacv_samples} lead to this behavior.

Finally, we see that there are also cases where the ratio of performance between GACV and ML-BLUE reaches a factor of two or more, suggesting that it is worthwhile to check the variance of both estimators (based on a pilot sample or other covariance estimate) to ensure that the best estimator is used in practical applications.

\begin{figure}
  \centering
  \includegraphics[scale=1]{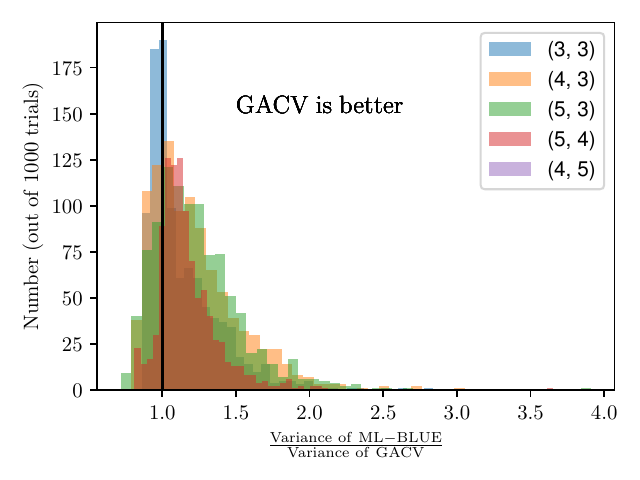}
  \caption{Histogram of the ratio of the ML-BLUE SAOB-M estimator variance to the GACV estimator variance for several $(L,M)$ pairs. The black line denotes equal performance between estimator where experiments to the right of the line indicate that the variance reduction of GACV is better. The GACV often exhibits better variance reduction, and this improvement is typically larger when $M$ is lower compared to $L$. The fully recursive case $(4,5)$ results in a variance ratio of 1.}
  \label{fig:nested_comparison}
\end{figure}

\section{Conclusion}
In this paper, we have extended the approximate control variate estimator to allow for model groupings. We have shown that this estimator encompasses all known ML-BLUE estimators, and is able to represent {\it any} linear combination of model samples. Using this relationship, we have derived a nested-sample GACV estimator from any SAOB-M ML-BLUE estimator and empirically demonstrated that having non-independent sample groupings can yield improvements in variance reduction.

The discovery of the proposed estimator has opened a number of questions. A number of areas for future work are clearly evident to fully realize the potential of this class of estimators.  Future work should address the issue of optimal sample allocation and estimator design to avoid having to perform model selection for choosing 
a hierarchical estimator based on a pilot set of covariances.
Moreover, new optimization approaches should be developed to enable less restricted sampling schemes than for models with independent groupings.
New schemes should be devised to simultaneously allocation samples and discover optimal groupings.
Finally, extensions to multiple outputs and additional statistics~\cite{dixon2023covariance} should be made for the GACV estimator.

\section{Acknowledgements}
      Alex A. Gorodetsky was supported in part by Sandia National Laboratories and, in part by an NSF CAREER Award CMMI-2238913. John D. Jakeman was supported by the US Department of Energy’s Office of Advanced Scientific Computing Research program.  Michael S. Eldred was supported by the National Nuclear Security Administration's Accelerated Strategic Computing program.

      Sandia National Laboratories is a multi-mission laboratory managed and operated by National Technology \& Engineering Solutions of Sandia, LLC (NTESS), a wholly owned subsidiary of Honeywell International Inc., for the U.S. Department of Energy’s National Nuclear Security Administration (DOE/NNSA) under contract DE-NA0003525. This written work is authored by an employee of NTESS. The employee, not NTESS, owns the right, title and interest in and to the written work and is responsible for its contents. Any subjective views or opinions that might be expressed in the written work do not necessarily represent the views of the U.S. Government. The publisher acknowledges that the U.S. Government retains a non-exclusive, paid-up, irrevocable, world-wide license to publish or reproduce the published form of this written work or allow others to do so, for U.S. Government purposes. The DOE will provide public access to results of federally sponsored research in accordance with the DOE Public Access Plan.

\bibliographystyle{plain}
\bibliography{refs}

\end{document}